%
%
\documentclass[draftcls,onecolumn]{IEEEtran}

\usepackage{amsthm,amsmath,amssymb}

\usepackage[disable]{todonotes}
\usepackage[markup]{changes}


%

%
\ifCLASSINFOpdf
\else
\fi

\usepackage{color}
\usepackage{hyperref}

\definecolor{darkblue}{rgb}{0.0,0.0,0.3}
\hypersetup{colorlinks,breaklinks,
            linkcolor=darkblue,urlcolor=darkblue,
            anchorcolor=darkblue,citecolor=darkblue}

%
\usepackage{url}




\hyphenation{op-tical net-works semi-conduc-tor}

\newtheorem{mytheorem}{Theorem}

\newtheorem{mylemma}[mytheorem]{Lemma}
\newcommand{\fourier}[1]{\hat{#1}}
\newcommand{\Fourier}{\mathcal{F}}
\newcommand{\spec}{\text{spec}}
\newcommand{\trace}{\text{tr}\,}
\newcommand{\schattenclass}{\mathcal{I}}
\newcommand{\opnorm}{\text{op}}
\newcommand{\Order}{\mathcal{O}}
\newcommand{\Reals}{\mathbb{R}}
\newcommand{\Naturals}{\mathbb{N}}
\newcommand{\traceP}{\text{tr}_\alpha\,}
\renewcommand{\Pr}{\text{Pr}}
\newcommand{\pw}{S}

\renewcommand{\url}[1]{}

\newcommand{\hbesov}{\dot{\text{B}}}
\newcommand{\besov}{\text{B}}

\begin{document}
%
\title{The Szeg\"o--Asymptotics for Doubly--Dispersive Gaussian Channels}

\author{\IEEEauthorblockN{Peter Jung}
  \IEEEauthorblockA{
    TU Berlin,
    Einsteinufer 25, 10587 Berlin, Germany \\
    peter.jung@tu-berlin.de}
}


%


\maketitle

\begin{abstract}
We consider the time--continuous doubly--dispersive channel with additive Gaussian noise
and establish
a capacity formula for the case where the channel operator is represented by a 
symbol which is periodic in time and fulfills some further
integrability, smoothness and oscillation conditions.
More precisely, we apply the well--known Holsinger-Gallager model for
translating a time--continuous channel for a sequence of
time--intervals of increasing length $\alpha\rightarrow\infty$ to a
series of equivalent sets of discrete, parallel channels, known at the
transmitter. 
We quantify conditions when this procedure converges.
Finally, under periodicity assumptions this result can indeed 
be justified as the channel capacity in the sense Shannon.
The key to this is result is a new Szeg\"o formula for certain
pseudo--differential operators with real--valued symbol. 
The Szeg\"o limit holds if the symbol belongs to the homogeneous Besov space
$\hbesov^1_{\infty,1}$ with respect to its time--dependency,
characterizing the oscillatory behavior in time.
Finally, the formula justifies the water--filling principle in time
and frequency as general technique independent of a sampling scheme.

\end{abstract}


%
\IEEEpeerreviewmaketitle

\let\thefootnote\relax\footnotetext{
  * This work has been partially
  presented on the IEEE ISIT conference, 2011 \cite{jung:isit2011}.
}

\section{Introduction}
The information--theoretic treatment of the time--continuous  channel 
dispersive in time and frequency (doubly--dispersive) with additive
Gaussian noise has been a problem of long interest. 
A well known result for the time--invariant
and power--limited case has been achieved 
by Gallager and Holsinger \cite{Holsinger1964} and \cite{gallager:inftheo} in 
discretizing  the time--continuous problem into an increasing sequence of parallel memoryless channels
with known information capacity $I_n$. Coding theorems for the time--discrete Gaussian channel 
can be used for the time--continuous channel whenever such a discretization is realizable.
A direct coding theorem without discretization has been established by Kadota and Wyner \cite{Kadota1972}
for the causal, stationary and asymptotically memoryless channel. 

The discretization in \cite{gallager:inftheo}
was achieved by representing a single use of the time--continuous channel as the restriction of the channel
operator to time intervals $\alpha\Omega$ of length $\alpha$. The quantity $I_n$ is then determined by 
spectral properties of the restricted operator.
A major step in the calculation for the time--invariant case was the exact determination of the limit:
\begin{equation}
   I(\pw):=\lim_{\alpha\rightarrow\infty}
   \left(\frac{1}{\alpha}\lim_{n\rightarrow\infty}I_n(\alpha \pw)\right)
   \label{eq:capacity:contin}
\end{equation}
which relies on the Kac--Murdock--Szeg\"o result \cite{Kac53} on the asymptotic spectral behaviour of
convolution operators. As the classical result of Shannon for the time--continuous band--limited 
channel and the discussion in \cite{wyner:bandlimited:capacity} shows, $I(S)$ has only 
a meaning of coding capacity for a power budget $\pw$ whenever there exists a sequence $\alpha_k$ of 
realizable discretization approaching this limit as $k\rightarrow\infty$. Some remaining problems in this direction, like for example the 
robustness of this limit against interference between 
different blocks, have been resolved for Gallager--Holsinger model
in \cite{cordaro:intersymbol:gaussian}. The limit has the advantage
of nice interpretation as ''water--filling'' along the frequencies:
\begin{equation}
   I(\pw)=\int_{B\cdot\sigma(\omega)\geq 1}\log(B\cdot\sigma(\omega)) d\omega
   \label{eq:capacity:holsinger}
\end{equation}
where $\sigma$ denotes Fourier transform of the correlation operator $L_\sigma$
(required to be absolute integrable and bounded), i.e. the positive symbol of 
a convolution operator. The constant $B$ is implicitly determined for a given power budget
by a relation similar to \eqref{eq:capacity:holsinger}.  

Since the time--invariant case represents the commutative setting a joint signaling scheme (like for
example orthogonal frequency division multiplexing) is 
permitted and the determination of the capacity is essentially reduced to a power allocation problem. Although the
coherent setting is considered so far  only the channel gains have to be given to the transmitter in this case.

However, doubly--dispersive channels represent the non--commutative generalization and do not admit a joint 
diagonalization such that there still remains the problem of proper signal design. Here, the 
channel operator can be characterized for example by the time--varying transfer function, i.e. the symbol 
$\sigma(x,\omega)$
of a so called pseudo-differential operator which depends on the frequency $\omega$ and the time instant $x$.
Obviously, by uncertainty an exact characterization of frequencies at time instants is meaningless and
the symbol can reflect spectral properties only in an averaged sense. Thus, it is important to know
whether the limit in \eqref{eq:capacity:contin} for a real symbol is asymptotically given by the average:
\begin{equation}
   \frac{1}{\alpha}\iint_{\alpha\Omega\times\Reals}r(B\cdot\sigma(x,\omega))dxd\omega
   \label{eq:capacity:ltv}
\end{equation}
for $\alpha\rightarrow\infty$ and
$r(x)=\log(x)\cdot\chi_{[1,\infty)}(x)$. Then, \eqref{eq:capacity:ltv}
with a similar integral for the function
$(x-1)/x\cdot\chi_{[1,\infty)}(x)$ represents the water--filling
principle in time and frequency. Obviously, this strategy is used
already in practice when optimizing rate functions in some long--term
meaning. But, in fast--fading scenarios for example it not clear
whether this procedure on a short time scale is indeed related to
\eqref{eq:capacity:contin}.

Averages closely related to the one in \eqref{eq:capacity:ltv} have been studied for a long time
in the context of asymptotic symbol calculus of pseudo-differential operators and
semi--classical analysis in quantum physics \cite{Widom1982,Hormander1979,Sobolev2010}. Unfortunately, the results
therein are not directly applicable in the information and communication theoretic setting because here
1.) the symbols of the restricted operators are (in general) discontinuous and usually not decaying in 
time 2.) the functions $r$ to be considered are neither analytic nor have the required smoothness
3.) the path of approaching the limit has to be explicitly in terms of an increasing sequence 
of interval restrictions (infinite--dimensional subspaces) in order to establish realisability.
For operators with semigroup property as the ''heat channel'' \cite{Hammerich2009}
it is possible approaching the limit via projections onto the (finite--dimensional)
span of an increasing sequence of Hermite functions as established in \cite{Zelditch1989} for 
Schr\"odinger operators. However, in the problem considered here this approach does not guarantees
the existence of signaling schemes of finite length $\alpha$ to practically achieve the limit
and a semi--group property of this particular type is not present.

\todo{Neuere arbeit \cite{Hammerich:2015} diskutieren}

The idea of approximate eigenfunctions of so called underspread channels \cite{kozek:thesis,jung:wcnc08}
has been used to obtain information--theoretical statements for
the non--coherent setting \cite{Durisi2010}. Signal design has then to be considered 
with respect to statistical properties \cite{jung:wssuspulseshaping}.
The method presented in this paper suggests that in the 
coherent setting the approximation in terms of trace norms is relevant.

\subsection{Main Results}
We establish a procedure for estimating 
the deviation of formula \eqref{eq:capacity:ltv} to desired quantity \eqref{eq:capacity:contin}.
It will be shown that both terms asymptotically agree for $\alpha\rightarrow\infty$
if the difference of symbol products $L_{\sigma\tau}$ and operator composition
$L_\sigma L_\tau$ can be controlled in trace norm on $\alpha\Omega$ with a sub-linear scaling in $\alpha$.
We will further discuss the information--theoretical impacts:
\begin{mytheorem}
   Let be $\sigma\in C^3$ be the symbol of 
   the channel's correlation operator $L_\sigma$ and $\sigma(x,\cdot)\in L_2$ uniformly in $x$.
   If $\lVert\sigma(x,\cdot)-\sigma(x,\cdot+h)\rVert^2_{L_2}\leq c |h|^{\beta}$ for $\beta\leq 1$
   and $\sigma(x,\omega)$ is $1$--periodic in $x$ the time--continuous capacity under 
   an average power constraint $S$ is given as:
   \begin{equation}
     I(S)=\iint_{(\Omega\times\Reals)\cap\{B\sigma\geq1\}}\log(B\cdot\sigma(x,\omega))dxd\omega
      \label{eq:capacity:ltv:periodic}
   \end{equation}
   with the constant $B=B(S)$ implicitly given by the equation:
    \begin{equation}
       S=\iint_{(\Omega\times\Reals)\cap\{B\sigma\geq1\}}\frac{B\cdot\sigma(x,\omega)-1}{B\cdot\sigma(x,\omega)}dxd\omega
      \label{eq:thm:power:ltv:periodic}
   \end{equation}
   if the (inverse) Fourier transform of $\sigma(x,\omega)$ in $\omega$ (the impulse response of 
   $L_\sigma$) is supported in a fixed interval.
   \label{thm:main}
\end{mytheorem}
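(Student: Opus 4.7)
My approach is to execute the Holsinger--Gallager program, but with a new Szegő-type asymptotic replacing the classical Kac--Murdock--Szegő ingredient. For each $\alpha$, truncate the correlation operator $L_\sigma$ to the interval $\alpha\Omega$ and let $\{\lambda_i^{(\alpha)}\}$ be its eigenvalues; since the resulting truncation is compact and positive, the capacity of the equivalent (infinite) sequence of parallel Gaussian channels under average power $\alpha S$ is delivered by water-filling as
\[
I_n(\alpha S) = \sum_i \log\!\bigl(B_\alpha\lambda_i^{(\alpha)}\bigr)\,\chi_{\{B_\alpha\lambda_i^{(\alpha)}\geq 1\}},
\]
with $B_\alpha$ fixed by the implicit power equation involving $(B_\alpha\lambda_i^{(\alpha)}-1)/(B_\alpha\lambda_i^{(\alpha)})$. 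After letting $n\to\infty$ (to exhaust all eigenvalues) and dividing by $\alpha$, the theorem reduces to showing that with $r(x)=\log(x)\chi_{[1,\infty)}(x)$,
\[
\frac{1}{\alpha}\sum_i r\!\bigl(B_\alpha\lambda_i^{(\alpha)}\bigr)\;\longrightarrow\;\iint_{\Omega\times\Reals} r\!\bigl(B\,\sigma(x,\omega)\bigr)\,dx\,d\omega,
\]
together with the analogous limit for the power functional that pins down $B_\alpha\to B$.

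\textbf{Szegő step.} The heart of the argument is a new Szegő-type asymptotic
\[
\sum_i f\!\bigl(\lambda_i^{(\alpha)}\bigr) \;=\; \iint_{\alpha\Omega\times\Reals} f(\sigma(x,\omega))\,dx\,d\omega + o(\alpha),
\]
valid for smooth $f$ on the essential range of $\sigma$. The standard reduction approximates $f$ by polynomials (Weierstrass) and the polynomial case reduces to controlling, in trace norm on $\alpha\Omega$ and with sub-linear growth in $\alpha$, the deviation between the operator composition $L_\sigma L_\tau$ and the pseudo-differential operator $L_{\sigma\tau}$ with product symbol. The hypotheses of the theorem are tailored to exactly this bound: the $C^3$ regularity and the Hölder-type condition $\|\sigma(x,\cdot)-\sigma(x,\cdot+h)\|_{L_2}^2\leq c|h|^\beta$ give the Besov-in-time control $\hbesov^1_{\infty,1}$ advertised in the abstract, while the compact support of the impulse response provides the necessary frequency regularity and uniform $L_2$ bounds for the symbol slices. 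This is where the main technical difficulty sits.

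\textbf{Non-smooth $r$, periodicity, and coding theorem.} To handle the corner of $r$ at $x=1$, I would sandwich $r$ between smooth cut-offs and close the gap using Weyl-type eigenvalue tail estimates derived from trace-class bounds on $L_\sigma|_{\alpha\Omega}$; the same argument applied to the power functional upgrades $B_\alpha\to B$ so that $B$ satisfies \eqref{eq:thm:power:ltv:periodic}. Invoking $1$-periodicity of $\sigma$ in $x$ then collapses $\tfrac{1}{\alpha}\iint_{\alpha\Omega\times\Reals}$ to $\iint_{\Omega\times\Reals}$ and yields \eqref{eq:capacity:ltv:periodic}. Finally, to pass from informational to operational capacity I would combine the direct coding theorem for parallel Gaussian channels with a block-concatenation argument in the spirit of \cite{cordaro:intersymbol:gaussian}: periodicity in time makes successive blocks $\alpha\Omega$ statistically identical, and the compact support of the impulse response in $\omega$ (hence Paley--Wiener decay in the time kernel) allows guard intervals of sub-linear length to drive inter-block interference to zero while preserving the per-symbol rate. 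The principal obstacle is unmistakably the Szegő step: unlike classical semi-classical or analytic settings, the symbol is only Hölder in $x$, the truncation introduces boundary discontinuities on $\partial(\alpha\Omega)$, and the target function $r$ is neither analytic nor smooth, so the trace-norm symbol-calculus estimate must be pushed well beyond the regime where textbook pseudo-differential results apply.
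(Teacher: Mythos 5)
Your high-level plan matches the paper's: Holsinger--Gallager discretization, reduce to a Szeg\H{o} asymptotic for $PL_\sigma P$, control the deviation $L_\sigma L_\tau - L_{\sigma\tau}$ in trace norm on $\alpha\Omega$ with sub-linear growth, mollify the corner of $r$, and exploit periodicity plus finite delay to pass from the information formula to an operational capacity. The identification of the symbol-product deviation as the crux, and of Besov-in-time regularity as the right hypothesis, is exactly right. Two steps, however, deviate from the paper in ways that matter.

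First, you propose to reach the Szeg\H{o} asymptotic by Weierstrass polynomial approximation. This is quantitatively weaker than what is needed. Writing $\traceP f(PL_\sigma P) - \traceP L_{f(\sigma)}$ and approximating $f$ by a polynomial $p_N$ incurs an error of order $\lVert f-p_N\rVert_\infty\cdot\#\{\text{eigenvalues}\}$, and the eigenvalue count grows like $\alpha$, so $N$ must grow with $\alpha$. But for $f(x)=x^n$ the symbol-product error accumulates over $n$ applications of the approximate product rule, so the polynomial-case error itself grows with $N$, and Weierstrass gives no rate to balance against. The paper avoids this altogether by representing $f(L_\sigma)=\int e(\omega L_\sigma)\fourier{f}(\omega)\,d\omega$ and $L_{f(\sigma)}=\int L_{e(\omega\sigma)}\fourier{f}(\omega)\,d\omega$ as Bochner integrals and running a Duhamel argument for $u(\omega)=e(\omega L_\sigma)-L_{e(\omega\sigma)}$, which isolates a \emph{single} factor $Q_\alpha(\nu)=\lVert P(L_\sigma L_{e(\nu\sigma)}-L_{\sigma e(\nu\sigma)})P\rVert_{\schattenclass_1}$ inside the integral. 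The price is a decay hypothesis on $\fourier{f}$, which is where the explicit $\epsilon=\alpha^{-\delta}$ mollification of the Heaviside factor in $r$ and $p$ enters; your ``smooth cut-offs plus Weyl-type tails'' is the right instinct, but without the Fourier representation there is nothing to trade $\epsilon$ against.

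Second, your reduction silently conflates two errors that the paper treats by different tools: the truncation-stability error $\traceP[f(PL_\sigma P)-f(L_\sigma)]$ and the symbol-calculus error $\traceP[f(L_\sigma)-L_{f(\sigma)}]$. Expanding $(PL_\sigma P)^n$ versus $P L_\sigma^n P$ produces $(1-P)$ insertions that are \emph{not} controlled by the $L_\sigma L_\tau$ vs.\ $L_{\sigma\tau}$ estimate at all; the paper handles them separately via the Laptev--Safarov/Berezin bound $|\traceP(f(L_\sigma)-f(PL_\sigma P))|\le \tfrac12\lVert f''\rVert_{L_\infty}\lVert PL_\sigma(1-P)\rVert_{\schattenclass_2}^2$, which is where the Hölder-in-$\omega$ hypothesis $\lVert\sigma(x,\cdot)-\sigma(x,\cdot+h)\rVert_{L_2}^2\le c|h|^\beta$ is actually used (it gives $\lVert PL_\sigma(1-P)\rVert_{\schattenclass_2}^2=\Order(\log\alpha)$). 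As written, your plan has no mechanism to absorb these boundary terms, and without that the Szeg\H{o} step does not close.
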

The paper is organized as follows: In Section \ref{sec:sysmodel} we introduce the channel model
and establish the problem as a Szeg\"o statement on the asymptotic symbol calculus for
pseudo--differential operators. The asymptotic is investigated in Section \ref{sec:asympt:trace}
as series of four sub--problems: an increasing family of interval sections, the asymptotic symbol calculus, 
an approximation method and finally a result on ''products'' of symbols. Following this line of four
arguments we able establish \eqref{eq:capacity:ltv:periodic}.

\section{System Model and Problem Statement}
\label{sec:sysmodel}
We use $L_p(\Omega)$ for usual Lebesgue spaces ($1\leq p\leq\infty$) of complex--valued functions 
on $\Omega\subseteq\Reals^n$ and abbreviate $L_p=L_p(\Reals^n)$ with corresponding norms 
$\lVert\cdot\rVert_{L_p}$. For $p=2$ the Hilbert space has inner product
$\langle u,v\rangle:=\int \bar{u}v$.
Classes of smooth functions up to order $k$ are denoted with $C^k$ and $\fourier{f}=\Fourier f$ 
is the Fourier transform of $f$. Partial derivatives of a function $\sigma(x,\omega)$ are written
as $\sigma_x$ and $\sigma_\omega$, respectively. $\schattenclass_2$ and $\schattenclass_1$ are
Hilbert--Schmidt and trace class operators with square--summable and absolute summable 
singular values and the symbol $\trace(X)$ denotes the trace of
an operator $X$ (more details will be given later on) on $L_2$.
\subsection{System Model}

We consider the common model of transmitting a finite energy signal $s$ with support in an 
interval $\Omega$ of length $\alpha$ through a channel represented by
a fixed linear operator $H$ and additive distortion $n_k$, i.e. 
quantities simultaneously measured at the receiver within the interval are expressed as noisy 
correlation responses:
\begin{equation}
   \langle r_k,Hs\rangle+n_k
   \label{eq:tc:channeluse}
\end{equation}
where $\{\langle r_k,\cdot\rangle\}$ are suitable normalized linear functionals implemented at the receiver. 
We assume Gaussian noise with $E(\bar{n}_kn_l)=\langle r_k,r_l\rangle$.

Let us denote with $(Pu)(x)=\chi(x/\alpha)u(x)$ the restriction of a 
function $u$ onto the interval $\alpha\Omega$.
Note that in what follows: \emph{$P$ always depends on $\alpha$}.
We will make in the following the assumption that the restriction $HP$ of the channel operator $H$ to 
input signals of length $\alpha$ with finite energy
is compact, i.e. the restriction $PL_\sigma P$ of the correlation operator $L_\sigma:=H^*H$ is compact as well
($H^*$ denotes the adjoint operator on $L_2$).
This excludes certain channel operators - like the identity -
which usually referred to as ''dimension-unlimited'', i.e. the wideband cases.
Assume that the kernel $k(x,y)$ of $L_\sigma$ fulfils:
\begin{equation}
   |k(x,x-z)|^2\leq\psi(z)
\end{equation}
for some  $\sqrt{\psi}\in L_1\cap L_2$\footnote{$\sup_{x\in\Reals} k(x,x-\cdot)\in L_1\cap L_2$}. 
Then its  (Kohn--Nirenberg) symbol or time-varying transfer function
is given by Fourier transformation:
\begin{equation}
   \begin{split}
      \sigma(x,\omega)
      &=\int e^{i2\pi\omega (x-y)} k(x,x-y)dy
   \end{split}
\end{equation}
The symbol is continuous and from the Riemann--Lebesgue Lemma it follows that 
$|\sigma(x,\omega)|\rightarrow 0$ as $|\omega|\rightarrow\infty$.
Throughout the paper we assume that $\sigma$ is real--valued (this can be circumvented 
when passing to the Weyl symbol since $L_\sigma$ is positive--definite). 
It follows that $\lVert \sigma(x,\cdot)\rVert_{L_2}^2=\lVert k(x,x-\cdot)\rVert_{L_2}^2\leq \lVert\psi\rVert_{L_1}$ 
uniformly in $x$ and that $L_\sigma$ is bounded on $L_2$:
\begin{equation}
   \begin{split}
      |\langle u, L_\sigma v\rangle|
      &=|\langle u\otimes\bar{v},k\rangle|
      \leq\langle |u\otimes v|,\sqrt{\psi}\rangle\\
      &=\langle |u|,\sqrt{\psi}\ast |v|\rangle
      \leq\lVert\sqrt{\psi}\rVert_{L_1}\lVert u\rVert_{L_2}\lVert v\rVert_{L_2}
   \end{split}
   \label{eq:szego:boundedOp}
\end{equation}
We will from now on use $\lVert\cdot\rVert_{\opnorm}:=\lVert\cdot\rVert_{L_2\rightarrow L_2}$ 
to denote the operator norm on $L_2$.
A compact operator $HP$ can be written via the Schmidt representation (singular value decomposition) as a
 limit of a sum of
rank--one operators $HP=\sum_k s_k\langle u_k,\cdot\rangle v_k$ with singular values 
$s_k=\sqrt{\lambda_k(PL_\sigma P)}$ and orthonormal bases
$\{u_k\}$ and $\{v_k\}$ -- \emph{all depending on $\alpha$}. 
For the coherent setting we assume that finite subsets of these 
bases are known and implementable at the transmitter and the receiver, respectively. 
Obviously, this is an idealized and seriously strong assumption which can certainly 
not be fulfilled without error in practise. The investigations in \cite{kozek:identification:bandlimited} suggest
that underspreadness of $H$ is necessary prerequisite for reliable error control.
When representing the signal $s$ as a finite linear combination of $\{u_k\}$ 
a single use of the time--continuous channel $H$ over the time interval  
$\alpha\Omega$ with power budget $\pw$ is decomposed into a single use of a finite set of parallel 
Gaussian channels jointly constrained to $\alpha \pw$.

We will consider in the following independent uses of the channel in \eqref{eq:tc:channeluse} as our 
preliminary model and restrict to $r_k=v_k$, i.e. $E(\bar{n}_kn_l)=\delta_{kl}$. Then, the capacity and the power budget 
of the equivalent memoryless Gaussian channel are related through the water--filling level $B$ 
as (see for example \cite{gallager:inftheo}):
\begin{equation}
   \begin{split}
      \frac{1}{\alpha}\sum_{B\lambda_k\geq 1}\log(B\lambda_k)
      =\frac{1}{\alpha}\traceP r(B\, PL_\sigma P)\\
      \frac{B}{\alpha}\sum_{B\lambda_k\geq 1}\frac{B\lambda_k-1}{B\lambda_k}
      =\frac{B}{\alpha}\traceP p(B\,  PL_\sigma P)\\
   \end{split}
   \label{eq:capacity:restricted}
\end{equation}
with $r(x)=\log(x)\cdot\chi_{[1,\infty)}(x)$ and $p(x)=\frac{x-1}{x}\cdot\chi_{[1,\infty)}(x)$.
The symbol $\traceP Y:=\trace(PYP)$ denotes the trace of the operator $Y$ on the range of $P$ and 
the operators $r(PXP)$ and $p(PXP)$ for $X$ being self--adjoint
are meant by spectral mapping theorem.

If the time--varying impulse response of $L_\sigma$ (or $H$)
has finite delay ($k(x,x-z)$ is zero for $z$ outside a fixed interval) and is periodic in
the time instants $x$ (the symbol $\sigma(x,\omega)$ is periodic in $x$) multiple channel 
uses in the preliminary model can be taken as consecutive uses
of the same time--continuous channel. Inserting guard periods of appropriate fixed 
(independent of $\alpha$) size will not affect the asymptotic for $\alpha\rightarrow\infty$.
Thus, any further results will then
indeed refer to the information (and coding) capacity. The assumptions on finite delay 
might be relaxed using directs methods like in \cite{cordaro:intersymbol:gaussian} or 
\cite{wyner:intersymbol:gaussian} whereby extensions to almost--periodic channels seems
to lie at the heart of information theory.

\subsection{Problem Statement}
The interval restriction $P$ has the symbol $\chi(\cdot/\alpha)$. The symbol 
of operator products is given as the twisted multiplication of the symbol of the factors. Under the trace
this is reduced to ordinary multiplication (see for example \cite{Estrada1989} in the case of Weyl correspondence).
Thus, the term in \eqref{eq:capacity:ltv} can be written as the following trace:
\begin{equation}
   \frac{1}{\alpha}\traceP L_{f(\sigma)}=\frac{1}{\alpha}\int_{\alpha\Omega\times\Reals} f(\sigma(x,\omega))dxd\omega
   \label{eq:capacity::ltvapprox}
\end{equation}
when taking $f(x)=r(Bx)$. Comparing \eqref{eq:capacity:restricted} with \eqref{eq:capacity::ltvapprox} means
to estimate the asymptotic behavior of:
\begin{equation}
   \begin{split}
      \frac{1}{\alpha}\traceP(f(P L_\sigma P)-L_{f(\sigma)})
   \end{split}
   \label{eq:capacity:error}
\end{equation}
for $\alpha\rightarrow\infty$ (we abbreviate $f(\sigma):=f\circ\sigma$).
As seen from $r$ and $p$ in \eqref{eq:capacity:restricted} the functions 
$f$ of interest are continuous but not differentiable at $x=1$. 

\section{Asymptotic Trace Formulas}
\label{sec:asympt:trace}
The procedure for estimating the difference in \eqref{eq:capacity:error} essentially consists
in the following arguments:
A functional calculus will be used to represent the function $f$ in the operator context.
For $L_{f(\sigma)}$ this can be done independently of $\alpha$ but 
for $f(P L_\sigma P)$ such an approach is much more complicated because of the 
remaining projections $P$. Hence, the first
step is to estimate its deviation to $f(L_\sigma)$ by inserting the zero term 
$\traceP(f(L_\sigma)-f(L_\sigma))/\alpha$ into \eqref{eq:capacity:error}:
\begin{equation}
   \begin{split}
      \frac{1}{\alpha}(\,\overbrace{\traceP [f(PL_\sigma P)-f(L_\sigma)]}^{\text{stability}}+
      \overbrace{\traceP [f(L_\sigma)-L_{f(\sigma)}]}^{\text{symbol calculus}}\,)
   \end{split}
   \label{eq:szego:splitup}
\end{equation}
and use $|\trace(a+b)|\leq |\trace a |+|\trace b |$ to estimate both terms separately. The
first contribution refers to the stability of interval sections (in Section \ref{sec:szego:stability}).
For second term a Fourier--based functional calculus reduces the problem
to the characterization of the approximate product rule for symbols (in Section \ref{sec:szego:asympt:symbol})
which can then be estimated independently of the particular function $f$ (in
Section \ref{sec:szego:approx:product}).
Unfortunately, the last steps require certain smoothness of $f$. Therefore we will approach the 
limit via smooth approximations $f_\epsilon$ as discussed in Section \ref{sec:szego:smooth:approx}.

\subsection{Stability of Interval Sections}
\label{sec:szego:stability}
The following stability result was inspired by the analysis on the Widom conjecture in \cite{Gioev2001}.
Let $\spec(L_\sigma)$ denote
the spectrum of $L_\sigma$. Then the interval $ I:=\bigcup_{t\in[0,1]} t\cdot \spec(L_\sigma)$
contains the spectra of the family $P L_\sigma P$ for each $\alpha$.
\begin{mytheorem}
   Let $L_\sigma$ be an operator with a kernel which fulfils $|k(x,x-z)|^2\leq \psi(z)$ with 
   $\psi\in L_1$. If $\lVert \psi(1-\chi_{[-s,s]})\rVert_{L_1}\leq c/s$
   then:
   \begin{equation}
      \frac{1}{\alpha}|\traceP (f(P L_\sigma P)-f(L_\sigma))|\leq 
      \lVert f''\rVert_{L_\infty(I)}\frac{\log(\alpha)}{\alpha}
   \end{equation}
   for $f''\in L_\infty(I)$.
   \label{thm:szego:laptev}      
\end{mytheorem}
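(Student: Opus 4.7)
The plan is to reduce the estimate to a Hilbert--Schmidt commutator bound, following the strategy of Gioev on the Widom conjecture. The central operator-theoretic ingredient is the inequality
\begin{equation*}
|\traceP(f(PAP)-f(A))| \,\leq\, C\,\|f''\|_{L_\infty(I)}\,\|[P,A]\|_{\schattenclass_2}^2,
\end{equation*}
valid for every bounded self-adjoint $A$ with spectrum in a compact interval $I$ and every orthogonal projection $P$. Combined with a direct estimate $\|[P,L_\sigma]\|_{\schattenclass_2}^2 = \Order(\log\alpha)$ obtained from the decay hypothesis on $\psi$, dividing by $\alpha$ produces the claimed rate.

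For the operator inequality I would decompose $A = A_d+A_o$ into its block-diagonal part $A_d := PAP+(I-P)A(I-P)$ and off-diagonal part $A_o := PA(I-P)+(I-P)AP$. Then $A_d$ preserves the subspaces $PL_2$ and $(I-P)L_2$, so $Pf(A_d)P = f(PAP)$; one also has $\|A_o\|_{\schattenclass_2}=\|[P,A]\|_{\schattenclass_2}$ and, crucially, $PA_oP=0$. Representing $f$ by the Helffer--Sj\"ostrand formula with an almost-analytic extension $\tilde f$ whose $\bar\partial$-derivative is controlled by $\|f''\|_{L_\infty(I)}$, the second resolvent identity iterated twice gives
\begin{equation*}
(z-A)^{-1}-(z-A_d)^{-1} \,=\, (z-A_d)^{-1}A_o(z-A_d)^{-1} + (z-A)^{-1}A_o(z-A_d)^{-1}A_o(z-A_d)^{-1}.
\end{equation*}
Sandwiching by $P$ annihilates the first summand (because $PA_oP=0$ and $(z-A_d)^{-1}$ respects the block decomposition), while the second is trace class with trace norm bounded by $\|A_o\|_{\schattenclass_2}^2/|\mathrm{Im}\,z|^3$. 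Integrating against $\bar\partial\tilde f$ and carrying out the standard Helffer--Sj\"ostrand bookkeeping produces the required inequality.

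For the commutator bound, the kernel of $[P,L_\sigma]$ is $(\chi(x/\alpha)-\chi(y/\alpha))k(x,y)$, so the change of variable $z=x-y$ gives
\begin{equation*}
\|[P,L_\sigma]\|_{\schattenclass_2}^2 \,\leq\, \int\psi(z)\,\bigl|\alpha\Omega \triangle (\alpha\Omega+z)\bigr|\,dz \,\leq\, 2\int\psi(z)\min(|z|,\alpha)\,dz,
\end{equation*}
since for an interval of length $\alpha$ the symmetric-difference measure is at most $2\min(|z|,\alpha)$. Setting $\Phi(s):=\int_{|z|>s}\psi(z)\,dz\leq c/s$, a standard integration by parts in $\int_0^\alpha z(-\Phi'(z))\,dz$ bounds the right-hand side by $2\|\psi\|_{L_1}+2c+2c\log\alpha = \Order(\log\alpha)$. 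Combining the two estimates and dividing by $\alpha$ yields the theorem.

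The main obstacle is the operator inequality: verifying that the Helffer--Sj\"ostrand (or equivalent double--Duhamel) argument genuinely delivers a bound in terms of $\|f''\|_{L_\infty(I)}$ rather than some stronger Besov or Wiener norm requires careful bookkeeping of the almost-analytic extension, particularly because the functions of later interest (smooth approximants to $r$ and $p$) have $\|f''\|_{L_\infty}$ of moderate size while higher-order norms may blow up. The cancellation $PA_oP=0$ is exactly what extracts the \emph{squared} commutator and hence the small $\log(\alpha)/\alpha$ rate; any weaker cancellation would only yield the $\Order(\sqrt{\alpha})$ obtainable from a single Cauchy--Schwarz step, which would be useless for the subsequent capacity analysis.
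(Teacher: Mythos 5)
Your high-level plan is the same as the paper's: reduce to an abstract trace inequality in terms of $\|f''\|_{L_\infty(I)}$ times a squared Hilbert--Schmidt quantity, then bound that quantity by $\Order(\log\alpha)$ from the kernel-decay hypothesis. Your handling of the second half is correct and essentially equivalent to the paper's: your symmetric-difference estimate $\|[P,L_\sigma]\|_{\schattenclass_2}^2\leq 2\int\psi(z)\min(|z|,\alpha)\,dz$ is (up to a factor $2$ from self-adjointness) the same as the paper's bound on $\|PL_\sigma(1-P)\|_{\schattenclass_2}^2$, and your integration by parts against $\Phi(s)=\int_{|z|>s}\psi\leq c/s$ mirrors theirs.

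The gap is in how you obtain the operator inequality. The paper does not prove it; it cites Laptev--Safarov, whose result is obtained from the Berezin (operator-convexity) inequality together with a second-order Taylor expansion, and that route produces a bound at exactly the $\|f''\|_{L_\infty}$ level. Your proposal replaces this with a Helffer--Sj\"ostrand double-Duhamel argument, and this does not deliver the claimed constant without strictly more smoothness than $f''\in L_\infty$. Concretely: the cancellation $PA_oP=0$ kills the first resolvent term, but the surviving second term has trace norm $\leq\|A_o\|_{\schattenclass_2}^2/|\mathrm{Im}\,z|^3$, and the standard almost-analytic extension of an $f$ with bounded second derivative satisfies only $|\bar\partial\tilde f(x+iy)|\lesssim\|f''\|_\infty|y|$; the resulting integral $\int|\bar\partial\tilde f|\,|\mathrm{Im}\,z|^{-3}$ diverges like $\int_0 y^{-2}\,dy$. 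To make the Helffer--Sj\"ostrand bookkeeping close, one needs $|\bar\partial\tilde f|\lesssim|y|^2$, i.e.\ effectively $f'''\in L_\infty$ (or a Besov/H\"older norm strictly stronger than $\|f''\|_{L_\infty}$). You flagged exactly this as ``the main obstacle'' but then asserted the bookkeeping works out; it does not, and this matters downstream, since the whole point of the $\|f''\|_{L_\infty(I)}$ formulation is that the later approximants $f_\epsilon=h\phi_\epsilon$ have $\|f_\epsilon''\|_{L_\infty}$ under control while higher derivatives blow up as $\epsilon\to0$. To repair the argument, either invoke the Laptev--Safarov Berezin-inequality lemma directly (as the paper does), or replace the double-Duhamel expansion with the operator Taylor formula $f(A)=f(A_d)+f'(A_d)(A-A_d)+\tfrac12\int_0^1(1-t)\,f''(A_d+t(A-A_d))(A-A_d)^2\,dt$ sandwiched by $P$, which exploits $P(A-A_d)P=0$ and bounds the remainder by $\tfrac12\|f''\|_{L_\infty}\|A_o\|_{\schattenclass_2}^2$ directly.
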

Further details, see here \cite{Laptev1996:Berezin}.
Recall that the functions $f$ to be considered here
are continuous and differentiable a.e. on $I$ (except at point $x=1$).
We will shortly discuss the proof of this theorem since it is only a minor variation of
\cite{Gioev2001}.
\begin{proof}
   Laptev and Safarov \cite{Laptev1996} have obtained from Berezin inequality the following
   estimate. For functions $f''\in L_\infty(I)$ the operator
   $P[f(L_\sigma ) -  f(PL_\sigma P)]P$ is trace class
   if $PL_\sigma $ and $PL_\sigma (1-P)$ are Hilbert--Schmidt with the trace estimate:
   \begin{equation}
      |\traceP (f(L_\sigma )- f(PL_\sigma P))| \leq\frac{1}{2}\lVert f''\rVert_{L_\infty(I)}
      \lVert PL_\sigma (1-P)\rVert^2_{\schattenclass_2}
   \end{equation}
   Recall that the interval projection $P$ is multiplication with the scaled characteristic function $\chi(\cdot/\alpha)$. Thus, change of variables $x=y'+x'$ and $y=y'-x'$ gives:
   \begin{equation}
      \begin{split}
         \lVert PL_\sigma \rVert_{\schattenclass_2}^2
         &=\int\chi(x/\alpha)|k(x,y)|^2  dxdy\\
         &\hspace*{-2em}\leq2\alpha^2\int\psi(2\alpha x')dx'\int\chi(y'+x')dy'\leq\alpha \lVert\psi\rVert_1
      \end{split}
   \end{equation}
   In the same manner we get:
   \begin{equation}
      \begin{split} 
         \lVert PL_\sigma (1&-P)\rVert^2_{\schattenclass_2}
         =\int \chi(\tfrac{x}{\alpha})(1-\chi(\tfrac{y}{\alpha}))|k(x,y)|^2dxdy\\
         &\leq \alpha^2\int \chi(x)(1-\chi(y))\psi(\alpha (x-y))dxdy\\
         &=\alpha^2\int\psi(2\alpha x)\cdot\omega(2x) dx 
      \end{split}
      \label{eq:thm:szego:laptev:eq2}
   \end{equation}
   with $\omega(x):=4|x|\leq 2$ for $|x|\leq 1/2$ and $\omega(x):=2$ outside this interval.
   With $u=2\alpha x$ and $\phi(u)=\psi(u)+\psi(-u)$ we split and estimate the integral as follows:
   \begin{equation}
      \begin{split} 
         \lVert PL_\sigma (1&-P)\rVert^2_{\schattenclass_2}    
         =\frac{\alpha}{2}\int_0^\infty \phi(u)\omega(\tfrac{u}{\alpha})du\\
         &\leq\frac{\alpha}{2}\left(\tfrac{8}{\alpha}\int_0^2
           +\int_2^{2\alpha}\tfrac{4u}{\alpha}+2\int_{2\alpha}^\infty \right)\phi(u)du
      \end{split}
   \end{equation}  
   With the assumptions of the theorem the terms are bounded separately: 
   \begin{equation}
      \begin{split} 
         \lVert PL_\sigma (1-P)\rVert^2_{\schattenclass_2}    
         &=4\lVert \psi\rVert_1
         +2\int_2^{2\alpha}\phi(u)udu
         +\frac{c}{2}
      \end{split}
   \end{equation}
   Finally we use $\phi(u)=-\frac{d}{du}\int_{u}^{\infty}\phi(s)ds$ and integrate by parts to obtain
   $\int_2^{2\alpha}\phi(u)udu= c (1+\log\alpha)$.
\end{proof}

\noindent{\bf Discussion:}
Consider again the impulse response $h(x,z)=k(x,x-z)$. The condition in the theorem is then:
\begin{equation}
   \lVert h(x,\cdot)(1-\chi_{[-s,s]})\rVert_{L_2}^2
   =\int_{-\infty}^{-s}\left(|h(x,z)|^2+|h(x,-z)|^2\right)dz\leq c/s
\end{equation}
for $s>0$. Differentiating both sides ($d/d(-s)$) gives the sufficient condition:
\begin{equation}
   |h(x,s)|^2+|h(x,-s)|^2\leq c/s^2
\end{equation}
Thus, if the kernel has the decay $|k(x,x-s)|\leq c/|s|$ for any $s\neq 0$ the condition is fulfilled, i.e. from integration
by parts for symbols $\sigma(x,\cdot)\in C^1$ and $\sigma_\omega(x,\cdot)\in L_1$  uniformly in $x$ and vanishing for $\omega$  at infinity.
More generally this holds for $\sigma(x,\cdot)\in L_2$ having $L_2$--modulus of continuity uniformly in $x$ 
$\lVert\sigma(x,\cdot)-\sigma(x,\cdot+h)\rVert^2_{L_2}\leq c |h|^{\beta}$ for $\beta\leq 1$ (see
\cite[Lemma 2.10]{Brandolini1997} for $\beta=1$ and its generalization \cite[Lemma 3.4.1]{Gioev2001} for $0<\beta\leq 1$).

\subsection{Asymptotic Symbol Calculus}
\label{sec:szego:asympt:symbol}
Here we shall use Fourier techniques to estimate 
the right term in \eqref{eq:szego:splitup}.
We abbreviate in the following $e(x)=\exp(i2\pi x)$. 
\begin{mylemma}
   Let $f$ be a $L_1$-function with $\fourier{f}(\omega)=\Order(\omega^{-4-\delta})$
   for some $\delta>0$ and $f(0)=0$.
   For $L_\sigma$ being bounded and self--adjoint on $L_2$ with 
   real--valued symbol $\sigma\in C^{3}$
   it follows that:
   \begin{equation}
      \begin{split}
         \frac{1}{\alpha}|\traceP (f(L_\sigma)-L_{f(\sigma)})|   \leq
         2\pi\int d\omega|\fourier{f}(\omega)|\int_0^\omega Q_\alpha(\nu)\frac{d\nu}{\alpha}
      \end{split}
      \label{eq:szego:lemma:asympt:symbol}
   \end{equation}
   with $Q_\alpha(\nu):=\lVert P\left(L_\sigma L_{e(\nu\sigma )}-L_{\sigma e(\nu\sigma )}\right)P\rVert_{\schattenclass_1}$.
   \label{lemma:szego:asympt:symbol}
\end{mylemma}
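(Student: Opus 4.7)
The plan is to reduce everything via Fourier-based functional calculus and a Duhamel-type interpolation to the ``symbol product defect'' operator $A(\nu):=L_\sigma L_{e(\nu\sigma)}-L_{\sigma e(\nu\sigma)}$ that already appears inside $Q_\alpha(\nu)$. First, the decay assumption $\fourier{f}(\omega)=\Order(|\omega|^{-4-\delta})$ together with boundedness of $L_\sigma$ justify inserting the inversion formula $f(x)=\int\fourier{f}(\omega)e(\omega x)\,d\omega$ into both the Borel functional calculus of the self-adjoint operator $L_\sigma$ and into the linear symbol-to-operator map. Combined with $f(0)=0$ this yields
\begin{equation*}
f(L_\sigma)-L_{f(\sigma)}=\int\fourier{f}(\omega)\bigl[U(\omega)-V(\omega)\bigr]d\omega,\qquad U(\omega):=e(\omega L_\sigma),\;V(\omega):=L_{e(\omega\sigma)}.
\end{equation*}
Taking $\trace(P\,\cdot\,P)$ reduces the question to pointwise control of $\trace(P[U(\omega)-V(\omega)]P)$ followed by an outer integration against $|\fourier{f}(\omega)|$.

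To compare $U$ and $V$, note that both equal $I$ at $\omega=0$ but evolve differently: $\partial_\omega U=i2\pi L_\sigma U$ while $\partial_\omega V=i2\pi L_{\sigma e(\omega\sigma)}$. Interpolating along the curve $F(t):=U(t\omega)V((1-t)\omega)$ (so that $F(0)=V(\omega)$ and $F(1)=U(\omega)$), computing $F'(t)$ using that $L_\sigma$ commutes with $U(t\omega)$, and changing variables to $\nu=(1-t)\omega$ produces the Duhamel representation
\begin{equation*}
U(\omega)-V(\omega)=i2\pi\int_0^\omega U(\omega-\nu)\,A(\nu)\,d\nu,
\end{equation*}
i.e.\ the precise product defect appearing inside $Q_\alpha(\nu)$. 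The $C^3$-regularity of $\sigma$ and uniform $L_2$-bound on $\sigma(x,\cdot)$ ensure all differentiations are valid in operator norm and that $A(\nu)$ is a well-defined bounded operator.

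The main technical obstacle is then the final step: converting $|\trace(PU(\omega-\nu)A(\nu)P)|$ into $Q_\alpha(\nu)=\lVert PA(\nu)P\rVert_{\schattenclass_1}$, with $P$ bracketing $A(\nu)$ and not $U(\omega-\nu)A(\nu)$. The tools are cyclicity of the trace, the idempotency $P^2=P$, the unitarity of $U(\omega-\nu)$ (a consequence of $L_\sigma=L_\sigma^*$), and the H\"older inequality $|\trace(XY)|\leq\lVert X\rVert_\opnorm\lVert Y\rVert_{\schattenclass_1}$. A parallel interpolation along $V((1-t)\omega)U(t\omega)$ produces the twin ``left-sided'' Duhamel formula $U(\omega)-V(\omega)=i2\pi\int_0^\omega \tilde A(\nu)U(\omega-\nu)\,d\nu$ with $\tilde A(\nu)$ having the same product-defect structure; averaging the two representations and rearranging via cyclicity and $P^2=P$ places the defect factor between two copies of $P$ inside the trace norm, while the unitary $U(\omega-\nu)$ contributes only a factor of $1$ in operator norm. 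Finally, Fubini (justified by the decay of $\fourier{f}$) interchanges the $\omega$- and $\nu$-integrations, and division by $\alpha$ gives the stated estimate \eqref{eq:szego:lemma:asympt:symbol}.
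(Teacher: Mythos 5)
Your overall route — Fourier/Bochner functional calculus plus a Duhamel-type representation and a Fubini swap — is the same as the paper's, and your way of getting the Duhamel identity is clean: the interpolation curve $F(t)=U(t\omega)V((1-t)\omega)$, with $U(\omega)=e(\omega L_\sigma)$ commuting with $L_\sigma$, does produce
$U(\omega)-V(\omega)=i2\pi\int_0^\omega U(\omega-\nu)\,A(\nu)\,d\nu$ with $A(\nu)=L_\sigma L_{e(\nu\sigma)}-L_{\sigma e(\nu\sigma)}$. The paper arrives at the same formula by verifying that $u(\omega)=U(\omega)-V(\omega)$ solves the inhomogeneous Cauchy problem $u'=i2\pi(L_\sigma u+A)$, $u(0)=0$, and then invoking Duhamel's principle; these are equivalent derivations.

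The gap is in your final reduction to $Q_\alpha(\nu)=\lVert PA(\nu)P\rVert_{\schattenclass_1}$. You assert that ``averaging the two representations and rearranging via cyclicity and $P^2=P$ places the defect factor between two copies of $P$.'' Two problems. First, the second interpolation $G(t)=V((1-t)\omega)U(t\omega)$ does not reproduce $A(\nu)$: a short computation gives $G'(t)=i2\pi\omega\,\tilde A((1-t)\omega)U(t\omega)$ with $\tilde A(\nu)=L_{e(\nu\sigma)}L_\sigma-L_{\sigma e(\nu\sigma)}$, i.e.\ the factors of the defect are reversed. Second, and more fundamentally, cyclicity of the trace cannot insert a projection \emph{between} the unitary propagator $U(\omega-\nu)$ and the defect. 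From $Pu(\omega)P=i2\pi\int_0^\omega PU(\omega-\nu)A(\nu)P\,d\nu$, cyclicity and $P^2=P$ give $\trace\bigl(PU(\omega-\nu)A(\nu)P\bigr)=\trace\bigl(U(\omega-\nu)A(\nu)P\bigr)$, whose modulus is controlled by $\lVert A(\nu)P\rVert_{\schattenclass_1}$, not by $\lVert PA(\nu)P\rVert_{\schattenclass_1}$. Since $\lVert PA(\nu)P\rVert_{\schattenclass_1}\le\lVert A(\nu)P\rVert_{\schattenclass_1}$ and these can genuinely differ (indeed $A(\nu)P$ need not be trace class when $PA(\nu)P$ is), your argument proves a strictly weaker estimate than the lemma states, and averaging the two representations does not repair this. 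The paper avoids the problem by posing the Cauchy problem on $L_2(\alpha\Omega)$, so the homogeneous propagator is taken as $e(\omega L_\sigma)P$ and the Duhamel integrand already reads $e((\omega-\nu)L_\sigma)P\,A(\nu)$; after compression this is $Pe((\omega-\nu)L_\sigma)P\cdot PA(\nu)P$, and H\"older with $\lVert Pe((\omega-\nu)L_\sigma)P\rVert_{\opnorm}\le 1$ gives exactly $Q_\alpha(\nu)$. Working entirely on $L_2(\Reals)$ and compressing only at the end, as you do, never produces this extra middle $P$.
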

The lemma shows that whenever the rhs in \eqref{eq:szego:lemma:asympt:symbol} is finite
the asymptotics for $\alpha\rightarrow\infty$
is determined only by $Q_\alpha/\alpha$. The function $Q_\alpha$ essentially compares 
the twisted product of $\sigma$ and $e(\nu\sigma)$ with the ordinary product $\sigma\cdot e(\nu\sigma )$
in trace norm reduced to intervals of length $\alpha$.
\begin{proof}
  The operator $e(\omega L_\sigma)$ depends continuously on $\omega$ and
  could be defined as the usual power series converging in norm since
  $L_\sigma$ is bounded.
  In particular $e(\omega L_\sigma)$ is unitary on $L_2$ ($L_\sigma$
  is self--adjoint) and $\lVert e(\omega L_\sigma)\rVert_{\opnorm}=1$.
  Thus, for $\fourier{f}\in L_1$ the operator--valued  integral:
   \begin{equation}
      f(L_\sigma)=\int e(\omega L_\sigma) \fourier{f}(\omega)d\omega
      \label{eq:szego:thm:gL}
   \end{equation}
   is a Bochner integral and $\lVert f(L_\sigma)\rVert_\opnorm\leq\lVert \fourier{f}\rVert_{1}$.

   Next, we consider the operator  $L_{f(\sigma)}$ with symbol $f(\sigma)=f\circ\sigma$.
   The value of $f\circ\sigma$ at each point can be expressed in terms of $\fourier{f}$. This suggests
   an integral formula similar to \eqref{eq:szego:thm:gL}:
   \begin{equation}
      L_{f(\sigma)}=\int L_{e(\omega\sigma)}\fourier{f}(\omega)d\omega
      \label{eq:szego:thm:Lg}
   \end{equation}
   where its convergence has to be discussed. Ensuring convergence in
   the sense of Bochner requires further control of  $\lVert L_{e(\nu\sigma)}\rVert_\opnorm$.
   From Calderon Vaillancourt Theorem \cite[Ch.5]{folland:harmonics:phasespace} 
   we have:
   \begin{equation}
      \lVert L_{e(\nu\sigma)}\rVert_\opnorm\leq \lVert e(\nu\sigma)\rVert_{C^3}
      :=\sum_{a+b\leq 3} |2\pi \nu|^{a+b}\lVert\partial_x^a\partial_\omega^b\sigma\rVert_\infty
      \label{eq:szego:thm:LeNorm}
   \end{equation}
   Thus, for $\fourier{f}(\omega)=\Order(\omega^{-4-\delta})$ and
   $\delta>0$ also the integral \eqref{eq:szego:thm:Lg}
   converge in the sense of Bochner. 
   From the considerations above we get therefore:
   \begin{equation}
      \begin{split}
         |\traceP (L_{f(\sigma)}-f(L_\sigma))|
         &\leq\int |\fourier{f}(\omega)|\cdot
         |\traceP u(\omega)|d\omega
      \end{split}
   \end{equation}
   with $u(\omega)=e(\omega L_\sigma)-L_{e(\omega\sigma)}$.
   As suggested in \cite{Sobolev2010} the operator $u(\omega)$ fulfils the following 
   identity\footnote{in the case of operators we use 
   $\partial_\omega e(\omega\sigma)=i2\pi\sigma  e(\omega\sigma)$ \cite[Lemma 5.1]{Gohberg1990}.}:
   \begin{equation}
      \begin{split}
         u'(\omega)
         &=i2\pi\left(L_\sigma  u(\omega)+L_\sigma L_{e(\omega\sigma)}-L_{\sigma e(\omega\sigma)}\right)\\
      \end{split}
      \label{eq:szego:cauchy:inhomogen}
   \end{equation}
   i.e. an inhomogenous Cauchy problem with initial condition  $u(0)=0$.
   By the Stones theorem, $\{u(\omega)\}_{\omega\in\Reals}$ is a strongly (norm--) continuous one--parameter family
   of operators on $L_2(\alpha\Omega)$.
   For a function $g_0\in L_2(\alpha\Omega)$ the solution of the homogeneous equation is its unitary evolution:
   \begin{equation}
      g_\omega = e(i2\pi\omega L_\sigma)g_0= \left[e(i2\pi\omega L_\sigma)P\right]g_0
   \end{equation}
   By Duhamel's principle (see for example
   \cite[p.50]{Cazenave1998} for the Banach--space valued case) the solution of \eqref{eq:szego:cauchy:inhomogen}  
   are the operators:
   \begin{equation}
      \begin{split}
         \hspace*{-1em}u(\omega)=\frac{2\pi}{i}\int_0^\omega e((\omega-\nu)L_\sigma P)
         \left(L_\sigma L_{e(\nu\sigma )}-L_{\sigma e(\nu\sigma )}\right)d\nu
      \end{split}
   \end{equation}
   considered on $L_2(\alpha\Omega)$. With $e((\omega-\nu)L_\sigma P)=e((\omega-\nu)L_\sigma )P$ this gives the estimate:
   \begin{equation}
      \begin{split}
         |\traceP u(\omega)|\leq2\pi\int_0^\omega
         \lVert P\left(L_\sigma L_{e(\nu\sigma )}-L_{\sigma e(\nu\sigma )}\right)P\rVert_{\schattenclass_1}d\nu
         =2\pi \int_0^\omega Q_\alpha(\nu)d\nu
      \end{split}
   \end{equation}
   since $\lVert Pe((\omega-\nu)L_\sigma )\rVert_\opnorm\leq 1$. 
\end{proof}
The smoothness assumptions in the theorem can be weakened to $\sigma\in C^{2+\delta}$ and 
$\fourier{f}(\omega)=\Order(\omega^{-3-\delta})$ when using H\"older-Zygmund spaces. 
Furthermore, there is variant in terms of the modulation space
$M^{\infty,1}$ (see \cite[pp.320]{grochenig:gaborbook}).
We expect that these conditions can be further reduced when
using in \eqref{eq:szego:thm:Lg} some weaker convergence in $\traceP$ 
instead of requiring a Bochner integral. The proof of the theorem can also be based
on the Paley--Wiener theorem, i.e. 
$f\rightarrow f(L_\sigma)$ and $f\rightarrow L_{f(\sigma)}$ are operator--valued
distributions of compact support with order at most $3$  and have therefore 
$C^3$ as natural domain (hier nochmal auf die decay condition im theorem eingehen).

\subsection{An Approximation Procedure}
\label{sec:szego:smooth:approx}
\todo[inline]{$J$ einfuehren}
Since $L_\sigma$ is bounded (see \eqref{eq:szego:boundedOp}) the functions $f$ 
will be evaluated only on a finite interval contained in $I$. We consider 
functions $f$  of the form $f(x)=h(x)\cdot\chi_{[1,\infty)}(x)$
with a critical point at $x=1$ (the function $h$ is ``sufficiently nice'' on $I$, e.g., in $C^\infty(I)$).
In general therefore, the first derivative of $f$ will be not continuous at $x=1$
but of bounded variation.
By smooth extension outside the interval $I$ its Fourier transforms
$\fourier{f}(\omega)$ can decay 
only as $\Order(\omega^{-2})$,  
see here for example \cite[Theorem 2.4]{Trefethen1996}, i.e. $f\in L_1\cap\Fourier L_1$.
Unfortunately, this is not sufficient for Lemma \ref{lemma:szego:asympt:symbol}.
Therefore, we replace the 
Heaviside function $\chi_{[1,\infty)}$
in $f$ by a series of smooth approximations $\phi_\epsilon$ as done for example in \cite{Hormander1979}.
Let be $\phi\in C^\infty$ strictly increasing
\todo[inline]{strictly increasing or not ?}
with $\phi(t)=0$ for $t\leq 0$ and $\phi(t)=1$ for
$t\geq 1$. Define $\phi_\epsilon(x)=\phi(\frac{x-1}{\epsilon})$ and consider  
$f_\epsilon=h\phi_\epsilon\in C^\infty_c$ instead of $f$ (again by smooth extension outside the interval $I$):
\begin{equation}
   \begin{split}
      |\fourier{f}_{\epsilon}(\omega)|
      \leq\frac{c'_n|I|}{|2\pi\omega|^{n}}\epsilon^{-n}
   \end{split}
   \label{eq:szego:smoothapprox:fourier:decay}
\end{equation}

\begin{figure}
   \centering\hspace*{-6em}
   \includegraphics[width=1\linewidth]{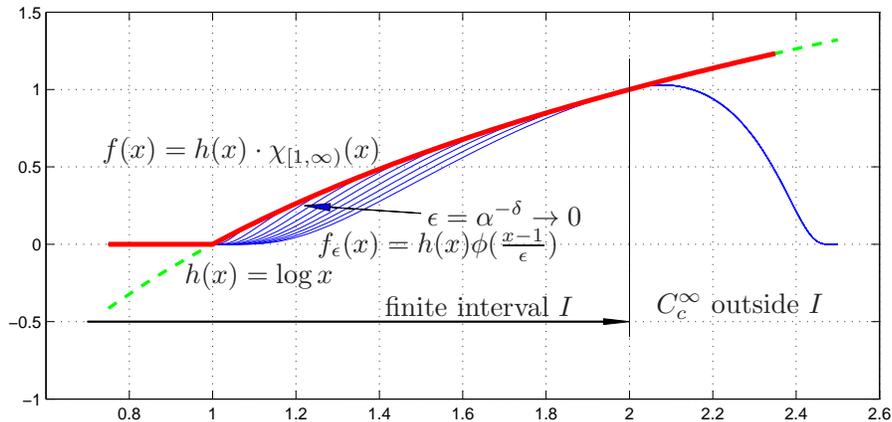}
   \caption{
     Adapting the approximation accuracy to the interval length
     $\alpha$: the symbol calculus has to be applied on
     non--differentiable functions $f(x)$ (red bold) on the interval
     $I=[0,2]$. For example, for the rates $f$ is the
     pointwise product of $h(x)=\log(x)$ and the step--function $\chi_{[1,\infty)}(x)$.
     The latter will be approximated on $I$ by a series of smooth and
     compactly supported
     functions $\phi((x-1)/\epsilon)$ with $\epsilon=\alpha^{-\delta}$
     for some $\delta>0$ such that we have $\epsilon\rightarrow 0$ for $\alpha\rightarrow\infty$.
   }
   \label{fig:heavisideapprox}
\end{figure}

\hrule\vspace*{2em}
We abbreviate $d_\epsilon=f-f_\epsilon$ and obtain from triangle
inequality and linearity in $f$ that:
\begin{equation}
  \begin{split}
    |\traceP (f(L_\sigma)-L_{f(\sigma)})|\leq 
    |\traceP (f_\epsilon(L_\sigma)-L_{f_\epsilon(\sigma)})|+
    |\traceP d_\epsilon(L_\sigma)|+
    |\traceP L_{d_\epsilon(\sigma)}|
  \end{split}
  \label{eq:szego:traceapprox:epsilon}
\end{equation}
We assume wlog that $\max_{t\in J} h(t)=1$.
The function $d_\epsilon$ is non--negative and of the following form:
\begin{equation}
  0\leq d_\epsilon=(\chi_{[1,\infty)}-\phi_\epsilon)h=(\chi_{[1,\infty)}-\phi((\cdot-1)/\epsilon))\cdot
  h\leq 1
\end{equation}
Furthermore,
$d_\epsilon$ is smooth except $x=1$ where it has no continuous
derivative. Therefore $\fourier{d}_{\epsilon}(\omega)$ decays as
$\Order(\omega^{-2})$ implying that $d_\epsilon\in L_1\cap\Fourier L_1$.
On its support $\text{supp}(d_\epsilon)\subseteq[1,1+\epsilon]$
the function  $d_\epsilon$  is upper--bounded by the strictly decreasing
function $g_\epsilon$:
\begin{equation}
  d_\epsilon\leq g_\epsilon:=(1-\phi((\cdot-1)/\epsilon))
\end{equation}

\paragraph{Scaling of $\traceP L_{d_\epsilon(\sigma)}$} 
For the last approximation term in \eqref{eq:szego:traceapprox:epsilon} we compute the trace using
\eqref{eq:capacity::ltvapprox}. Let $\mu$ be Lebesgue--measure in
$\alpha\Omega\times\Reals$ and
$z=(x,\omega)\in\alpha\Omega\times\Reals$. 
Since we assume
$\lVert\sigma\rVert_{L_2(\alpha\Omega\times\Reals)}^2=\Order(\alpha)$
we have the following standard estimate:
\begin{equation}
  \mu\{|\sigma|\geq t\}\leq\frac{1}{t^2}\lVert\sigma\rVert_{L_2(\alpha\Omega\times\Reals)}^2=\Order(\alpha)
\end{equation}
and $\mu\{|\sigma|\geq t\}$ is therefore decreasing in $t$. We also
use the layer cake representation of integrals (see
\cite{lieb:analysis}). Let $\{\nu:F(z)\geq t\}$ be
super--level set of the non--negative measurable function $F$. Then:
\begin{equation}
  F(z)=\int_0^\infty\chi_{\{\nu:F(\nu)\geq t\}}(z)dt
  \label{eq:layercake}
\end{equation}
We start from \eqref{eq:capacity::ltvapprox} with the property that 
$d_\epsilon\leq g_\epsilon\leq 1$ on its support $[1,1+\epsilon]$ and
therefore also $g_\epsilon(\sigma)\leq\sigma=|\sigma|$ on the set where
$1\leq\sigma$. Thus, using Fubini theorem:
\begin{equation}
  \begin{split}
    \traceP L_{d_\epsilon(\sigma)}
    &\leq\int_{\alpha\Omega\times\Reals} g_\epsilon(\sigma(z))d\mu(z)
    \leq\int_{\{1\leq\sigma\leq1+\epsilon\}} |\sigma| d\mu
    \overset{\eqref{eq:layercake}}{=}\int_{\{1\leq\sigma\leq1+\epsilon\}}
    \left(\int_0^\infty\chi_{\{|\sigma|\geq t\}}(z)dt\right)d\mu\\
    &=\int_1^{1+\epsilon}
    \left(\int_{\{1\leq\sigma\leq1+\epsilon\}}\chi_{\{|\sigma|\geq t\}}(z)d\mu\right)dt
    \leq\int_1^{1+\epsilon} \mu\{\sigma\geq t\} dt\leq 
    \epsilon\cdot\mu\{\sigma\geq 1\}=\Order(\epsilon\cdot\alpha)
  \end{split}
\end{equation}
The last inequality follows since $\mu\{\sigma\geq t\}$ is decreasing
in $t$.

\paragraph{Scaling of $\traceP d_\epsilon(L_\sigma)$} 
Let $S=\spec(PL_\sigma P)$ be the spectrum of $PL_\sigma P$. We have the relation (see here \eqref{eq:capacity::ltvapprox}):
\begin{equation}
  \sum_{1\leq \lambda\in S}\lambda\leq \sum_{\lambda\in S} \lambda=\traceP L_{\sigma}=\int_{\alpha\Omega\times\Reals}
  \sigma(x,\omega)dxd\omega=\Order(\alpha)\quad\text{to check}
  \label{eq:szego:trace:epsalpha}
\end{equation}
We perform now similar steps as above for a counting measure instead
of $\mu$ and we abbreviate $S_\epsilon:=\{\lambda\in S:\, 1\leq\lambda\leq1+\epsilon\}$.
From spectral theorem it follows that:
\begin{equation}
  \begin{split}
    \traceP d_\epsilon(L_{\sigma})
    &=\sum_{\lambda\in S}  d_\epsilon(\lambda)
    \leq\sum_{\lambda\in S_\epsilon}g_\epsilon(\lambda)
    \leq\sum_{\lambda\in S_\epsilon}\lambda
    \overset{\eqref{eq:layercake}}{=}\sum_{\lambda\in S_\epsilon}\int_0^\infty\chi_{\{\nu\in S:\,
      \nu\geq t\}}(\lambda)dt\\
    &=\int_0^\infty\left(\sum_{\lambda\in S_\epsilon}\chi_{\{\nu\in S:\,
        \nu\geq t\}}(\lambda)\right)dt
    = \int_0^\infty
    |\{\lambda\in S_\epsilon: \lambda\geq t\}|dt\\
    &\leq \int_1^{1+\epsilon}(\sum_{1\leq \lambda\in S}1)dt
    \leq \epsilon(\sum_{1\leq \lambda\in S}\lambda)\overset{\eqref{eq:szego:trace:epsalpha}}{\leq}\Order(\epsilon\alpha)
  \end{split}
\end{equation}

\hrule
\vspace*{2em}

In essence: polynomial grow of $Q_\alpha(\nu)$ in $\nu$ can always be compensated 
by taking $n$ large enough such that at the rhs in \eqref{eq:szego:lemma:asympt:symbol} remains
a finite quantity $R_\alpha(\epsilon)$. If for example $R_\alpha(\epsilon)=\Order(\alpha^{-\gamma})$, 
we choose $\epsilon=\alpha^{-\delta}$ with $\delta<\gamma/n$. Then 
$R_\alpha(\epsilon)\rightarrow 0$ and $\epsilon\rightarrow 0$ for $\alpha\rightarrow\infty$
which is obviously sufficient for the limit.

\subsection{Approximate Symbol Products}
\label{sec:szego:approx:product}

Polynomial orders of $Q_\alpha(\nu)$ in $\nu$ which will occur in the following will be compensated by the approximation method
in Section \ref{sec:szego:smooth:approx}. The role of $\tau$ and $\sigma$ can also be interchanged 
since according \eqref{eq:szego:thm:LeNorm} $L_\tau$ is bounded polynomially in $s$.
\todo[inline]{ Since in the time--invariant setting this part is not necessary. But Gallager 
(find the paper of Baker ''Information Capacity of Stationary Gaussian Channel'' ...)
requires that $\sigma(\omega)\in L_1$ and we found that $\sigma(\omega)\in L_2$ is enough (why ? due to
the approximation step ?).}

Let us abbreviate $\tau=e(\nu\sigma )=\exp(i2\pi\nu\sigma)$. 
Then the operator in the term $Q_\alpha(\nu)/\alpha$
of Lemma \ref{lemma:szego:asympt:symbol} is the deviation between operator and symbol product
$L_\sigma L_{\tau}-L_{\sigma \tau}$. 
As in \cite{Widom1982} we insert $L_\sigma L_{\bar{\tau}}^*-L_\sigma L_{\bar{\tau}}^*=0$.
Define the operators $T=L_{\bar{\tau}}^*-L_{\tau}$ and $T'=L_\sigma L_{\bar{\tau}}^*-L_{\sigma \tau}$
and apply triangle inequality to obtain:
\begin{equation}
   \begin{split}
      Q_\alpha(\nu)
      &\leq \lVert P L_\sigma TP\rVert_{\schattenclass_1}+\lVert PT'P\rVert_{\schattenclass_1}\\
      &\leq \lVert P L_\sigma PTP\rVert_{\schattenclass_1}+\lVert P L_\sigma (1-P)TP\rVert_{\schattenclass_1}+\lVert PT'P\rVert_{\schattenclass_1}\\
      &\leq \lVert P L_\sigma PTP\rVert_{\schattenclass_1}+\lVert PT'P\rVert_{\schattenclass_1}+
      \lVert P L_\sigma (1-P)\rVert_{\schattenclass_2}\cdot\lVert TP\rVert_{\schattenclass_2}\\
      &\overset{\eqref{eq:thm:szego:laptev:eq2}}{\leq}
      \lVert PL_\sigma\rVert_{\opnorm}\lVert PTP\rVert_{\schattenclass_1}      
      +\lVert PT'P\rVert_{\schattenclass_1}+c\sqrt{1+\log(\alpha)}\lVert TP\rVert_{\schattenclass_2}\\
   \end{split}
   \label{eq:szego:qalpha}
\end{equation}
where \eqref{eq:thm:szego:laptev:eq2} from the proof of Theorem \ref{thm:szego:laptev} has been used.
The operators $T$ and $T'$ have the
kernels $t(x,y)$ and $t'(x,y)$ defined formally as:
\begin{equation}
   \begin{split}
      t(x,y)&=\int e^{i2\pi(x-y)\omega}(\tau(x,\omega)-\tau(y,\omega))d\omega\\
      t'(x,y)&=\int e^{i2\pi(x-y)\omega}\sigma(x,\omega)(\tau(x,\omega)-\tau(y,\omega))d\omega
   \end{split}
   \label{eq:szego:tkernels}
\end{equation}
The meaning of these integrals has to be discussed.
Note again that $\tau(x,\omega)=\exp(i2\pi\nu\sigma(x,\omega))$ is a pure phase symbol and $\sigma\in C^3$, i.e.
from $\sigma(x,\cdot)\in L_2$ follows that $\sigma(x,\cdot)$ vanishes at infinity for each $x$. We have therefore
for all $x$ and $y$:
\begin{equation}
   2\geq |\tau(x,\omega)-\tau(y,\omega)|\rightarrow 0\quad\text{as}\quad |\omega|\rightarrow\infty
\end{equation}
From integration by parts ($n$ times) we have:
\begin{equation}
   \begin{split}
      t(x,y)&=\int e^{i2\pi(x-y)\omega}\frac{\partial^n_\omega(\tau(x,\omega)-\tau(y,\omega))}{(i2\pi(x-y))^n}d\omega
      =:\int e^{i2\pi(x-y)\omega}t_\omega(x,y)d\omega
      \\
      t'(x,y)&=\int e^{i2\pi(x-y)\omega}\frac{\partial^n_\omega(\sigma(x,\omega)(\tau(x,\omega)-\tau(y,\omega)))}{(i2\pi(x-y))^n}d\omega
      =:\int e^{i2\pi(x-y)\omega}t'_\omega(x,y)d\omega
   \end{split}
   \label{eq:szego:tkernels:byparts}
\end{equation}
Both kernels $t_\omega$ and $t'_\omega$ are finite linear combinations of the form:
\begin{equation}
   \left(\sigma^{(k)}(x,\omega)^j\right)\frac{\partial^n_\omega(\tau(x,\omega)-\tau(y,\omega))}{(i2\pi(x-y))^m}
   \label{eq:szego:tkernels:form}
\end{equation}
for $j=0\dots$ and $m\geq n$ where $n\geq 1$ is fixed. 
\todo[inline]{Really $j=0$ ? Or $j=1$}
We shall argue later that it will be sufficient to consider the case $n=1$.
Let then $T_\omega$ and $T'_\omega$ be the corresponding operators
with the kernels $t_\omega$ and $t'_\omega$.
Once the trace norms of the restricted operators $PT_\omega P$ and
$PT'_\omega P$ decay sufficiently the estimate:
\begin{equation}
   \lVert PTP\rVert_{\schattenclass_1}\leq \int \lVert PT_\omega P\rVert_{\schattenclass_1}  d\omega\quad(\text{same for }T')
   \label{eq:szego:PTP}
\end{equation}
is possible. This will be related to the oscillatory behavior of the
symbol and we will investigate this in the next section.
Before continueing on this point, we consider the Hilbert--Schmidt norm $\lVert TP\rVert_{\schattenclass_2}$
in \eqref{eq:szego:qalpha}. From \eqref{eq:szego:tkernels:byparts} for
$n=1$ we get:
\begin{equation}
   \begin{split}
      t(x,y)&=\frac{h(x,x-y)-h(y,x-y)}{i2\pi(x-y)}
   \end{split}
\end{equation}
where $h(x,z)=\int e^{i2\pi\omega z}\tau_\omega(x,\omega)d\omega$ and 
$|\tau_\omega|=|2\pi s\sigma_\omega|$. 
If $\sigma_{x\omega}(x,\cdot)\in L_2$ uniformly in $x$ we deduce with the mean value theorem
that $|t(x,y)|^2\leq c/(1+|x-y|^2)$. This in turns implies that
$\lVert TP\rVert_{\schattenclass_2}=\Order(\sqrt{\alpha})$. Since $L_\sigma$ is 
bounded \eqref{eq:szego:qalpha} yields:
\begin{equation}
   \begin{split}
      Q_\alpha(\nu)
      \leq
      c_1\lVert PTP\rVert_{\schattenclass_1}      
      +\lVert PT'P\rVert_{\schattenclass_1}+c_2\sqrt{\alpha(1+\log(\alpha))}
   \end{split}
   \label{eq:szego:qalpha1}
\end{equation}
Thus, it remains to evaluate trace norms of kernels $t(x,y)$ and
$t'(x,y)$ restricted to $\alpha\Omega\times\alpha\Omega$
(recall that the restriction operators $P=P_\alpha$ depend on $\alpha$
and the operators $T$, $T'$ depend on $\nu$ through their kernels $t$
and $t'$).

\subsection{Paracommutators and Schur Multipliers}
Under mild assumptions on the $\omega$--dependency of $\sigma(x,\omega)$
the overall trace norm  $\lVert PTP\rVert_{\schattenclass_1}$ and $\lVert PT'P\rVert_{\schattenclass_1}$ in \eqref{eq:szego:qalpha1}
can be estimated following the decomposition in \eqref{eq:szego:PTP}
and the oscillatory character of $\sigma(x,\omega)$ in $x$
will play the major role.
We will express the contributions $PT_\omega P$ and $PT'_\omega P$ in
\eqref{eq:szego:PTP}
in the form of \emph{paracommutators}
which have been studied for example by Janson and Peetre in \cite{Janson1988}. A
paracommutator $T_b$ with \emph{symbol} $b$ is a
``para--multiplication'' of the following form:
\begin{equation}
  \widehat{T_b f}(\xi)=\int
  \fourier{b}(\xi-\eta)A(\xi,\eta)\fourier{f}(\eta)d\eta
  \label{eq:szego:def:paracomm}
\end{equation}
where the function $A(\xi,\eta)$ acts as a Schur multiplier
(see below).
This definition includes Toeplitz and Hankel operators, i.e., for
example for $A(\xi,\eta)=1$ its a pointwise multiplication $f\cdot b$.

\paragraph{Schur Multipliers} An important tool therein are bounded Schur--multipliers
(see here also \cite{Peller1985}). For exposition, let $m(x,y)$ be a function represented by:
\begin{equation}
   m(x,y)=\int m_x(x,t)m_y(y,t)d\mu(t)
   \label{eq:szego:schur:def}
\end{equation}
with some sigma--finite measure $\mu$ and denote with $M$ the corresponding mapping which operates on 
kernels $k$ of operators $K$ of a particular Schatten ideal $\schattenclass_p$ for $1\leq p\leq\infty$. Then:
\begin{equation}
   \lVert MK \rVert_{\schattenclass_p}\leq 
   \lVert K\rVert_{\schattenclass_p}\cdot\int \lVert m_x(\cdot,t)\rVert_{L_\infty}\cdot \lVert m_y(\cdot,t)\rVert_{L_\infty}d\mu(t)
   \leq\lVert K\rVert_{\schattenclass_p}\cdot \lVert M\rVert_M
\end{equation}
and $\lVert M\rVert_M$ is essentially defined here as the infimum over all representations
\eqref{eq:szego:schur:def}.
For $\mu$ being a simple point--measure at $t_0$ the kernel of $MK$
takes the form $m_x(x)k(x,y)m_y(y)$ with $m_x=m_x(\cdot,t_0)$ and $m_y=m_y(\cdot,t_0)$
such that 
$\lVert MK\rVert_{\schattenclass_p}\leq \lVert K\rVert_{\schattenclass_p}\lVert m_x\rVert_{L_\infty}\lVert m_y\rVert_{L_\infty}$.

A direct consequence of these tools is that for investigation of scaling

the term
$\sigma^{(k)}(x,\omega)^j$ in \eqref{eq:szego:tkernels:form} can be
dropped if $\sigma(x,\omega)$ and it derivatives like $\partial_\omega\sigma(x,\omega)$
are uniformly bounded in $x$ and $\omega$.
They are Schur--multipliers in $x$ for fixed $\omega$ with $\lVert\sigma(\cdot,\omega)\rVert_M=\lVert\sigma(\cdot,\omega)\rVert_{L_\infty}\leq C$
where the constant $C$ does not depend on $\omega$ and the size $\alpha$ of interval $\alpha\Omega$.

\paragraph{The Paracommutator}
From Schur--multiplier techniques it follows also that we can replace the outer support restriction $\chi_\alpha(x)\chi_\alpha(y)$
with certain (inner) smooth functions $\phi_\alpha(x):=\phi(x/\alpha)$, $\phi_\alpha(y)$ and $\phi_\alpha((x-y)/2)$ 
where $\phi\in C^{\infty}$ with $\phi(x)=1$ for $|x|\leq 1$ and 
$\phi(x)=0$ for $|x|\geq 2$. 
More precisely, since 
\begin{equation}
   \chi_\alpha(x)\chi_\alpha(y)=\chi_\alpha(x)\chi_\alpha(y)\phi_\alpha(x)\phi_\alpha(y)\phi_\alpha((x-y)/2)
\end{equation}
we can write the kernels (we do not write here $\omega$--dependency explicitely) as:
\begin{equation}
   \begin{split}
      k(x,y)
      &=\left[\tau(x)\phi_\alpha(x)-\tau(y)\phi_\alpha(y)\right]\frac{\phi_\alpha((x-y)/2)}{x-y}\cdot\chi_\alpha(x)\chi_\alpha(y)\\
   \end{split}
\end{equation}
where either $\tau(x)=\exp(i2\pi s\sigma(x,\omega))$ or $\tau(x)=\partial_\omega\exp(i2\pi s\sigma(x,\omega))$.
Since the restriction $\chi_\alpha(x)\chi_\alpha(y)$ is a
Schur--multiplier of norm one, the trace norm is upper--bounded by the trace of the smooth kernel:
\begin{equation}
   \begin{split}
      k(x,y)
      &=\left[\tau(x)\phi_\alpha(x)-\tau(y)\phi_\alpha(y)\right]\frac{\phi_\alpha((x-y)/2)}{x-y}
      =:\left[b(x)-b(y)\right]a(x-y)
   \end{split}
\end{equation}
where $a$ is a distribution defined for $z\neq0$ by $a(z)=\phi(z/(2\alpha))/z$ and $b$ is a smooth function defined by:
\begin{equation}
   b(x)=\begin{cases}
      \phi_\alpha(x)e^{i2\pi s \sigma(x,\omega)} \quad&\text{for case}\,(T)\\
      \phi_\alpha(x)\partial_\omega e^{i2\pi s\sigma(x,\omega)} \quad&\text{for case}\,(T')
   \end{cases}
   \label{eq:szego:b}
\end{equation}
Next, we path to the Fourier kernel and we use here abbreviations from \cite{Janson1988}, i.e. we write: 
\begin{equation}
   k(x,y)=\int_{u+v=1}b(ux+vy)a(x-y)d\mu(u,v)
\end{equation}
with the point measure $\mu(u,v)=\delta(u-1,v)-\delta(u,v-1)$. The Fourier kernel $\fourier{k}=(\Fourier\otimes\Fourier^*)k$ of $T$ 
is the kernel of the operator $\Fourier T\Fourier^{*}$ defined by 
$\langle f,Tg\rangle=\langle\fourier{f},\Fourier T\Fourier^*\fourier{g}\rangle$ since
\begin{equation}
   \langle f\otimes \bar{g},k\rangle
   =\langle \Fourier^*\fourier{f}\otimes\overline{\Fourier^*\fourier{g}},k\rangle
   =\langle (\Fourier^*\otimes\Fourier)\fourier{f}\otimes\bar{\fourier{g}},k\rangle
   =\langle \fourier{f}\otimes\bar{\fourier{g}},(\Fourier\otimes\Fourier^*)k\rangle
\end{equation}
We get therefore (repeating the steps in \cite[p.499]{Janson1988}):
\begin{equation}
   \begin{split}
      \fourier{k}(\xi,\eta)
      &=\iint e^{-i2\pi(x\xi-y\eta)}k(x,y)dxdy\\
      &=\int_{u+v=1} d\mu(u,v)\iint e^{-i2\pi(x\xi-y\eta)}b(ux+vy)a(x-y)dxdy\quad\text{with}\, z=x-y\\
      &=\int_{u+v=1} d\mu(u,v)\iint e^{-i2\pi(z\xi+y(\xi-\eta))}b(uz+y)a(z)dzdy\quad\text{with}\, w=uz+y\\
      &=\int_{u+v=1} d\mu(u,v)\iint e^{-i2\pi(z\xi+(w-uz)(\xi-\eta))}b(w)a(z)dzdw\\
      &=\int_{u+v=1} \fourier{b}(\xi-\eta)\fourier{a}(v\xi+u\eta)\,d\mu(u,v)
      =\fourier{b}(\xi-\eta)\left[\fourier{a}(\eta)-\fourier{a}(\xi)\right]\\
      & =:\fourier{b}(\xi-\eta)A(\xi,\eta)\\
   \end{split}
\end{equation}
According to \eqref{eq:szego:def:paracomm} this is the Fourier kernel
of a paracommutator with
$A(\xi,\eta)=\fourier{a}(\eta)-\fourier{a}(\xi)$. Except of the singularity of $a(z)$ at point $z=0$ the definition of the Fourier transforms 
are not problematic since the functions are $a$ and $b$ are continuous and compactly supported.
\newcommand{\PV}{\text{PV}\!\!\!}

\paragraph{Calderon--Zygmund commutator}
Consider now $a(z)=\phi(z/\alpha)/z^m$ for $m\geq n$ which occurs when integrating by parts $n\geq 1$ times in \eqref{eq:szego:tkernels:byparts}.
The dominating order in $\alpha$ is indeed for $m=1$.
The Fourier integral of $a$ is not absolutely convergent unless $\phi(z)=0$, i.e., has only the meaning of a principal value.
To compute
its Fourier transform we use that $\phi\in C^\infty$ and that $\phi^{(m-1)}(0)=0$ for $m>1$, due to the symmetry of 
$\phi$ around zero. We get (see here also \cite[pp.324]{Folland2009}) with $\psi(z):=e^{-i2\pi\alpha\xi z}\phi(z)$:
\begin{equation}
   \begin{split}
      \fourier{a}(\xi)
      &=\int e^{-i2\pi\xi z}\frac{\phi(z/\alpha)}{z^m}dz
      =\alpha^{1-m}\int \frac{\psi(z)}{z^m}dz
      \overset{(m>1)}{=}\frac{\alpha^{1-m}}{(m-1)!}\int \frac{\psi^{(m-1)}(z)}{z}dz\\
      &=\frac{1}{(m-1)!}\int e^{-i2\pi\alpha\xi z}\frac{\phi^{(m-1)}(z)}{z}dz+\Order(\alpha^{-1})
      \rightarrow\text{sgn}(\xi)+\Order(\alpha^{-1})
   \end{split}
\end{equation}
that the leading term is given by the Hilbert transform of $\psi^{(m-1)}$.

In the exposition above we already sticked to $n=1$, i.e. $k(x,y)$ for $m=1$ is the (smoothly truncated) kernel 
of the so called \emph{Calderon--Zygmund commutator} (commutator of a multiplier $b$ with the Hilbert transform 
having the kernel $a$). The Schatten class properties of such type of operators have been investigated by \cite{Peller1982}
and \cite{Rochberg1982} and are related to the oscillatory characterization of the multiplier $b$.
Here we follow the lines of \cite{Janson1988}.

\subsection{Schatten--Properties of Paracommutars and Besov Spaces}
Recall that we need to evaluate the scaling of trace norms in
\eqref{eq:szego:qalpha1}. The corresponding operators $T$ and $T'$ can
be decomposed into integrals \eqref{eq:szego:PTP} once the symbol
$\sigma(x,\omega)$ has sufficient decay in $\omega$ (wideband case are
therefore excluded). Each contribution to the
integrals is a trace norm of restrictions of the operators $T_\omega$
and $T'_\omega$. As explained above, this can related to trace norms
of  para--commutators, i.e. Fourier integral operators of the form
\eqref{eq:szego:def:paracomm} with symbols 
$b$ as defined \eqref{eq:szego:b}. The most well--known example here
is the Calderon--Zygmund commutator which can be written in terms of
Hankel operator. Peller was the first who observed that
such operators are nuclear (trace--class) if the symbol is in a
particular Besov--Space \cite{Peller79}. 
\paragraph{Homogeneous Besov Spaces}
Let $\{\phi_k\}_{k=\infty}^\infty$ be the
Paley--Littlewood decomposition, i.e., 
$\phi=\phi_0\in C_c^\infty$ with
$\text{supp}(\fourier{\phi})\subset\{\frac{1}{2}\leq|\omega|\leq 2\}$
and
$\fourier{\phi}_k:=\fourier{\phi}(2^{-k}\cdot)$ with:
\begin{equation}
  \sum_k \fourier{\phi}_k(\omega)=1\quad\text{for all}\quad 0\neq \omega\in\Reals
\end{equation}
For $-\infty<s<\infty$ and $0<p,q\leq\infty$,  
the homogeneous Besov spaces\footnote{
  In \cite{Herz1968} the spaces are denoted by $\Lambda_{pq}^s$.
}
of distributions (modulo polynomials) are defined by the quasi--norms:
\begin{equation}
   \lVert f\rVert_{\hbesov^s_{pq}}=\left(\sum_{k\in\mathbb{Z}} \left(2^{ks}\lVert \phi_k\ast f\rVert_p\right)^q\right)^{1/q}
   =\lVert\{2^{ks}\lVert \phi_k\ast f\rVert_p\}_{k\in\mathbb{Z}}\rVert_{\ell_q}
   \label{eq:def:hbesov}
\end{equation}
The spaces are called homogeneous since for the whole $(p,q,s)$--range, given above, it follows
that for all $\alpha>0$ it holds (see for example \cite[Proposition 3.8]{Moussai2012}):
\begin{equation}
  c_1\lVert f\rVert_{\hbesov^s_{pq}}\leq
  \alpha^{-(1/p-s)} \lVert f(\cdot/\alpha)\rVert_{\hbesov^s_{pq}}
  \leq c_2\lVert f\rVert_{\hbesov^s_{pq}}
   \label{eq:hbesov:homogenity}
\end{equation}
with equality for $\alpha=2^{-k}$ for some $k$ (see \cite[Remark 4 on p.239]{Triebel:functionspaces:1},
\cite[Remark 2 on p.94]{Runst1996} and also \cite[Lemma 1.2 on p.288]{Herz1968}).
The Besov spaces $\hbesov^s_{pq}$ for $s<1/p$ or $s=1/p$ with $q=1$ can be regarded as subspaces of tempered
distributions $S'$. Obviously, for $q'\leq q$ there holds the inclusion $\hbesov^s_{pq'}\subseteq\hbesov^s_{pq}$ since
$\ell_{q'}\subseteq\ell_{q}$ (see \eqref{eq:def:hbesov}).
Furthermore, for $1\leq p\leq p'\leq\infty$ there holds
$\hbesov_{pq}^s\subseteq\hbesov_{p'q}^{s'}$ when $s-s'=1/p-1/p'$
(recall that we consider only dimension one).

We will further abbreviate some a fixed $q$ (here for $q=1$)
$\hbesov^s_p:=\hbesov_{p1}^{s}$ and the scale--invariant
Besov spaces (for $q=1$) with $\hbesov_p:=\hbesov_{p}^{1/p}$.

In particular, $\hbesov_p\cap L_\infty$ are (Quasi--) Banach
algebras\footnote{The meaning of \eqref{eq:hbesov:algebra}
  has to be taken with care since homogeneous Besov spaces are
  \emph{equivalent classes} of 
  distributions modulo polynomials. The precise statement 
  can be found in \cite[Theorem 3.26]{Moussai2012}.
} (see here \cite[Remark 2 on p.148]{Peetre1976})
and we have for $s>0$ \cite[Lemma 1.5 on p.293]{Herz1968} and also \cite[Theorem 3.26]{Moussai2012}:
\begin{equation}
   \lVert fg\rVert_{\hbesov^s_{pq}}\leq c\left(\lVert
     f\rVert_{\hbesov^s_{pq}}\lVert g\rVert_{L_\infty}+\lVert
     g\rVert_{\hbesov^s_{pq}}\lVert f\rVert_{L_\infty}\right)
   \label{eq:hbesov:algebra}
\end{equation}

\paragraph{Trace--Class Results}
We have to consider paracommutators $\fourier{b}(\xi-\eta)A(\xi,\eta)$ with 
$A(\xi,\eta)=\fourier{a}(\eta)-\fourier{a}(\xi)$ where $b$ is given in
\eqref{eq:szego:b} and $a(z)=\phi(z/(2\alpha))/z$. We will follow the
notation in \cite{Janson1988}. The function $A(\xi,\eta)$ is a
uniformely bounded Schur multiplier  and vanishes on the diagonal, i.e., $A(\xi,\xi)=0$
(conditions A1 and A3($\infty$) in \cite{Janson1988}). The following
theorem holds:
\begin{mytheorem}[Thm. 8.1 in  \cite{Janson1988}]
  Let $T_b$ be a paracommutator in the form
  \eqref{eq:szego:def:paracomm} with
  $A(\xi,\eta)=\fourier{a}(\eta)-\fourier{a}(\xi)$
  and $a$ as defined above.
  For $1\leq p\leq\infty$ it holds:
  \begin{equation}
    \lVert T_b\rVert_{\schattenclass_p}\leq C\lVert b\rVert_{\hbesov_p}
  \end{equation}
  \label{thm:besov:traceclass}
\end{mytheorem}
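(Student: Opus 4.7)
The plan is to establish the per-block trace-class bound through a Paley--Littlewood decomposition of the symbol $b$ and to exploit the special structure of $A(\xi,\eta) = \fourier{a}(\eta) - \fourier{a}(\xi)$ as a bounded Schur multiplier that vanishes on the diagonal. Recall that by definition $\hbesov_p = \hbesov^{1/p}_{p,1}$, so it suffices to prove the dyadic block estimate $\lVert T_{b_k}\rVert_{\schattenclass_p} \leq C\, 2^{k/p}\lVert b_k\rVert_{L_p}$ for each $k\in\mathbb{Z}$ with $b_k := \phi_k * b$, whereupon
\begin{equation*}
  \lVert T_b\rVert_{\schattenclass_p} \leq \sum_{k\in\mathbb{Z}} \lVert T_{b_k}\rVert_{\schattenclass_p}
  \leq C\sum_{k\in\mathbb{Z}} 2^{k/p}\lVert b_k\rVert_{L_p} = C\lVert b\rVert_{\hbesov_p}.
\end{equation*}

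First I would collect the relevant facts on $A$: by the computation preceding the theorem, $\fourier{a}(\xi) \to \operatorname{sgn}(\xi)$ uniformly in $\xi$ away from the origin (with an $\Order(\alpha^{-1})$ correction) and is smooth on each half-line. Consequently $A$ is uniformly bounded on $\Reals^2$, vanishes on $\{\xi=\eta\}$, and satisfies the Janson--Peetre conditions (A1) and (A3$(\infty)$). For any bounded kernel of the form $m(\xi,\eta)$ that admits a decomposition as in \eqref{eq:szego:schur:def} with finite Schur norm, pointwise multiplication of the Fourier kernel by $m$ preserves Schatten classes with constant $\lVert m\rVert_M$. Thus the task reduces to controlling the ``convolution'' piece $C_{b_k}$ with Fourier kernel $\fourier{b_k}(\xi-\eta)$, modulo Schur multipliers that are uniformly bounded in $k$ and $\alpha$.

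For the block $C_{b_k}$, I would exploit that $\fourier{b_k}$ is supported in the annulus $\{2^{k-1}\le|\omega|\le 2^{k+1}\}$ to produce a nuclear decomposition at scale $2^{-k}$. Tiling the time axis by intervals of length $\asymp 2^{-k}$ and writing $b_k$ as a sum of localized pieces $b_k = \sum_j b_{k,j}$ with essentially disjoint supports, each contribution $T_{b_{k,j}}$ factors through a roughly one-dimensional time--frequency cell and admits the elementary bound $\lVert T_{b_{k,j}}\rVert_{\schattenclass_p} \leq C\, 2^{k/p}\lVert b_{k,j}\rVert_{L_p}$ (rank one in the limit, with the $2^{k/p}$ coming from the frequency localization). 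The endpoint cases $p=1$ and $p=\infty$ follow directly: for $p=\infty$ the Schur multiplier $A$ together with the $L^\infty$-to-$L^\infty$ boundedness of the frequency-localized convolution by $b_k$ gives the operator-norm estimate; for $p=1$ the nuclear decomposition above provides the trace-class bound. The intermediate range is recovered by complex interpolation on the analytic family $b \mapsto T_b$.

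The main obstacle is the dyadic-block Schatten estimate and the verification that the $\alpha$-dependent features — the cutoff $\phi(z/(2\alpha))$ in $a$ and the $\Order(\alpha^{-1})$ error in $\fourier{a}$ — do not spoil uniformity. The uniformity follows because $\fourier{a}$ is bounded independently of $\alpha$ and varies on a unit frequency scale, which is harmless compared to the dyadic scale $2^k$ carrying the main contribution; the error term contributes a Schur multiplier of norm $\Order(\alpha^{-1})$ that can be absorbed into the constant $C$. Once the per-block bound is secured, summation in $k$ is immediate from the definition of $\hbesov^{1/p}_{p,1}$ and yields the theorem.
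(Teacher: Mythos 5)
This theorem carries no proof in the paper: it is stated as a direct citation of Theorem~8.1 in Janson--Peetre \cite{Janson1988}, so there is no ``paper's own proof'' to compare against. Your proposal therefore attempts to reconstruct a substantial external result from scratch, and while the overall scaffolding (Littlewood--Paley blocks, Schur multipliers, endpoint estimates plus interpolation) is the right flavor, there is a genuine logical gap in the reduction step.

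The flaw is in the claim that, because $A$ is a bounded Schur multiplier, ``the task reduces to controlling the `convolution' piece $C_{b_k}$ with Fourier kernel $\fourier{b}_k(\xi-\eta)$, modulo Schur multipliers.'' The Schur-multiplier inequality runs in the wrong direction for this reduction: $\lVert A\rVert_M<\infty$ gives $\lVert T_{b_k}\rVert_{\schattenclass_p}\leq\lVert A\rVert_M\lVert C_{b_k}\rVert_{\schattenclass_p}$, but the operator $C_{b_k}$ with Fourier kernel $\fourier{b}_k(\xi-\eta)$ is simply pointwise multiplication by $b_k$ on $L_2(\Reals)$ --- not compact, hence $\lVert C_{b_k}\rVert_{\schattenclass_p}=\infty$ for every $p<\infty$. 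The vanishing of $A$ on the diagonal is not a uniformity detail to absorb into the constant; it is the \emph{only} mechanism by which $T_b$ becomes compact at all, and it must enter the argument essentially, not be stripped off by a Schur-norm bound. Consequently your subsequent claim that $T_{b_{k,j}}$ is ``rank one in the limit'' with $\lVert T_{b_{k,j}}\rVert_{\schattenclass_p}\leq C\,2^{k/p}\lVert b_{k,j}\rVert_{L_p}$ is asserted without justification; the Fourier kernel $\fourier{b}_{k,j}(\xi-\eta)A(\xi,\eta)$ lives on an unbounded strip in $(\xi,\eta)$ and is not evidently of bounded rank. What is actually required (and is carried out carefully in Janson--Peetre and, for the Hankel case, in Peller) is to exploit the smoothness of $\fourier{a}$ away from $\xi=\eta$, together with the annular support of $\fourier{b}_k$, to show that the kernel of $T_{b_k}$ in physical space decays rapidly at scale $2^{-k}$ in $x-y$, and only then can a tiling into nearly orthogonal, essentially finite-rank blocks be made rigorous. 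Your interpolation step from $p=1$ and $p=\infty$ is fine in principle (the complex interpolation $(\hbesov_{1,1}^1,\hbesov_{\infty,1}^0)_\theta=\hbesov_{p,1}^{1/p}$ indeed matches $\hbesov_p$), but the $p=1$ endpoint is not established by the argument as given.
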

We have to investigate paracommutators $T_b$ with symbols $b$ given by
equation \eqref{eq:szego:b}, i.e., which depend on the symbol $\sigma(\cdot,\omega)$
for a particular $\omega$ and on the interval length $\alpha$. Thus, which conditions on the
$x$--dependency of the symbols $\sigma(x,\omega)$ of $L_\sigma$ ensure
a sublinear scaling in $\alpha$ of $\lVert b\rVert_{\hbesov_p}$ ?
According to \eqref{eq:szego:b} this involves the behavior of Besov
norms with respect to pointwise multiplication with the smooth cutoff functions
$\phi_\alpha$ and compositions of the form  $\exp(i2\pi\nu\sigma)$. 
We will discuss both topics separately, yielding 
Lemma \ref{lemma:besov:pointwise} and Lemma \ref{lemma:besov:composition},
and combine the results at the end as Theorem \ref{thm:besov:traceclass:updated}.

\subsection{Pointwise Multiplications on Besov Spaces}
In this part we will explicitely investigate the behavior of the
homogeneous Besov norms with respect to pointwise multiplications. We
need conditions on a distribution $g$ which allow for the following limit:
\begin{equation}
  \frac{1}{\alpha} \lVert \phi(\cdot/\alpha)\cdot g\rVert_{\hbesov_1}
  \rightarrow 0\quad\text{for}\quad\alpha\rightarrow\infty
  \label{eq:szego:besov:mult:limit}
\end{equation}
where $\phi$ is a smooth cutoff function and, in general, $g$ is
\emph{not vanishing at infinity}. The reason for this is that
for $g$ we shall later use for example
$\exp(i2\pi\sigma(\cdot,\omega))$. Obviously, 
to obtain non--zero transmission capacity the symbol $\sigma(x,\omega)$,
somehow representing the ``channel power'' over time $x$,
can not vanish for $x\rightarrow\infty$.
Unfortunately, then - the algebra
property \eqref{eq:hbesov:algebra} is not helpful in this context.
\todo[inline]{not vanishing at infinity only for $p=\infty$}

Instead a Hoelder--type inequality for Besov spaces is required here
which  is known for inhomogeneous Besov spaces (see for example \cite{Runst1996}).
Using similar results for the homogeneous spaces in \cite{Sawada2003} we get the following theorem:
\begin{mylemma}
  Let $\phi\in C^\infty_c$ and $g\in\hbesov_\infty^{1}$. Then
  \begin{equation}
    \lim_{\alpha\rightarrow\infty}\frac{1}{\alpha} \lVert \phi(\cdot/\alpha)\cdot g\rVert_{\hbesov_1}=0
  \end{equation}
  \label{lemma:besov:pointwise}
\end{mylemma}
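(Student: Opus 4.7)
The plan is to reduce the estimate to a Hölder-type product rule for homogeneous Besov spaces, as established in \cite{Sawada2003} (the inhomogeneous counterpart being \cite{Runst1996}), combined with the homogeneity identity \eqref{eq:hbesov:homogenity} applied to the cutoff $\phi(\cdot/\alpha)$.

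First, I would invoke a bilinear estimate of the form
\begin{equation*}
\lVert fg\rVert_{\hbesov^1_{1,1}} \leq C\bigl(\lVert f\rVert_{\hbesov^{s_1}_{p_1,1}}\lVert g\rVert_{\hbesov^{s_2}_{p_2,1}}+\lVert f\rVert_{\hbesov^{s_2}_{p_2,1}}\lVert g\rVert_{\hbesov^{s_1}_{p_1,1}}\bigr)
\end{equation*}
valid for admissible exponents with $s_1+s_2=1$ and $1/p_1+1/p_2=1$. The $g$-factors on the right-hand side will be controlled by the hypothesis $g\in\hbesov^1_{\infty,1}$, using the embedding $\hbesov^1_{\infty,1}\hookrightarrow\hbesov^{s_2}_{p_2,1}$ along the scale line $s-1/p=1$ mentioned in the paragraph following \eqref{eq:def:hbesov}, together with the fact that $\hbesov^1_{\infty,1}\hookrightarrow L_\infty$ in dimension one.

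Second, I would insert the homogeneity $\lVert\phi(\cdot/\alpha)\rVert_{\hbesov^{s}_{p,1}}\asymp\alpha^{1/p-s}\lVert\phi\rVert_{\hbesov^{s}_{p,1}}$ from \eqref{eq:hbesov:homogenity}. Any Hölder split with $s_i>1/p_i-1$ strictly contributes a power of $\alpha$ that is strictly less than $1$, so that after dividing by $\alpha$ and passing to $\alpha\to\infty$ each summand on the right-hand side vanishes in the limit.

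The main obstacle is the delicate balance between the product rule and the homogeneity: the target space $\hbesov_1=\hbesov^1_{1,1}$ sits precisely on the scale-invariant line $s=1/p$, so any Hölder split that keeps both summands on the critical line gives only an $O(\alpha)$ bound and thus merely an $O(1)$ quantity after dividing by $\alpha$. The split must therefore be pushed at least slightly off-critical on one side. The corresponding $g$-norm then lies on a different scale line than the given $\hbesov^1_{\infty,1}$ and its control must be recovered via the homogeneous embedding theory of \cite{Herz1968} and \cite{Moussai2012}, where the equivalence modulo polynomials must be tracked carefully; this is the step where the full strength of the homogeneous (as opposed to inhomogeneous) product inequality from \cite{Sawada2003} is used. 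Once such an off-critical split is fixed and the embedding is verified, the claimed limit follows immediately from the two scaling estimates and the arbitrariness of the (strict) off-critical slack.
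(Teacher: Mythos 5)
Your overall framing is the right one --- the paper also reduces the claim to the Sawada--Ogawa product inequality for homogeneous Besov spaces together with the dilation identity \eqref{eq:hbesov:homogenity} applied to $\phi_\alpha=\phi(\cdot/\alpha)$ --- but the resolution you propose for the ``critical line'' obstacle is not the paper's, and as stated it has a gap.

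You suggest fixing a strictly off-critical split and then letting the fixed positive slack produce a decay $\alpha^{-\epsilon}$; you correctly anticipate that this moves the $g$-factor to a space other than $\hbesov^1_{\infty,1}$, and you propose to repair this by the embedding ``$\hbesov^1_{\infty,1}\hookrightarrow\hbesov^{s_2}_{p_2,1}$ along the scale line $s-1/p=1$.'' That embedding goes the wrong way: the homogeneous Sobolev-type embedding $\hbesov^s_{p,q}\subseteq\hbesov^{s'}_{p',q}$ requires $p\leq p'$ with $s-s'=1/p-1/p'$, so starting from $p=\infty$ the only reachable target is $p'=\infty$ (and $s'=s$). There is no homogeneous embedding that lowers the integrability index from $\infty$ to a finite $p_2$, so once $\bar r$ is pinned to a fixed finite value (which a fixed $\theta,\sigma>0$ forces, since the critical terms only vanish when $\bar r<1/\max(\theta,\sigma)$, as noted in the proof), the required $g$-norms $\lVert g\rVert_{\hbesov^{-\theta}_{\bar r,1}}$, $\lVert g\rVert_{\hbesov^{\theta+1}_{\bar r,1}}$ are simply not controlled by the hypothesis. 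The algebra property and the embedding $\hbesov^1_{\infty,1}\hookrightarrow L_\infty$ you invoke do not close this gap, and the paper explicitly flags that the endpoint $(r,\bar r)=(1,\infty)$ cannot be used directly.

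The paper's way out, which your proposal misses, is to make the Hölder exponents and smoothness parameters in the Sawada--Ogawa inequality \emph{depend on $\alpha$}: it sets $L=\log\alpha$, takes $r=(L+1)/L$, $\bar r=L+1$, and $\theta=\sigma=\delta=1/(L+1)-1/L^{\epsilon}$ for a fixed $\epsilon>1$, so that $(r,\bar r,\theta,\sigma,\delta)\to(1,\infty,0,0,0)$ while the compatibility constraint $\bar r<1/\max(\theta,\sigma)$ is respected for every finite $\alpha\geq\alpha_0$. It is this moving family of splits --- not a single fixed off-critical one --- that lets the $g$-side spaces slide towards the given $\hbesov^1_{\infty,1}$ while still keeping the $\phi_\alpha$-side exponents strictly positive for each finite $\alpha$. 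Your proposal should be amended to replace the ``fixed off-critical slack'' by this $\alpha$-dependent choice of parameters, and to drop the nonexistent embedding from $p=\infty$ to finite $p$.
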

A well--known fact is that $g\in\hbesov_{\infty,1}^{1}$ if
and only if $g'\in\hbesov_{\infty,1}^{0}$. 
\begin{proof}
  For a given $1\leq r\leq\infty$ the Hoelder--dual
  exponent is defined by $1/r+1/\bar{r}=1$ and let us abbreviate then: 
  \begin{equation}
    [f,g]_{\hbesov_{r}^{a,b}}:=
    \lVert f\rVert_{\hbesov_{r}^a}\cdot\lVert g\rVert_{\hbesov_{\bar{r}}^{-b}}+
    \lVert f\rVert_{\hbesov_{r}^{-b}}\cdot\lVert g\rVert_{\hbesov_{\bar{r}}^a}
  \end{equation}
  We use Proposition and Remark 5 on p.11 in \cite{Sawada2003} for the case $p=q=s=1$.
  Let $1\leq r\leq\infty$ with  $\sigma>0$,
  $\theta\geq0$, $0<\delta\leq 1$ and $0\neq N\in\Naturals$. It holds:
  \begin{equation}
    \lVert f g\rVert_{\hbesov_1}\leq
    (N^2+1)[f,g]_{\hbesov_{r}^{\theta+1,\theta}}+
    2^{-N\delta}(N+1)\left(
      [f,g]_{\hbesov_{r}^{\sigma+1+\delta,\sigma}}+ 
      [f,g]_{\hbesov_{r}^{\sigma+1-\delta,\sigma}}
    \right)
  \end{equation}
  We will apply this now on \eqref{eq:szego:besov:mult:limit} where
  $f=\phi(\cdot/\alpha)$. From the scaling property \eqref{eq:hbesov:homogenity}
  we have that:
  \begin{equation}
    \frac{1}{\alpha}\lVert\phi_\alpha\rVert_{\hbesov_{r}^s}
    \leq c\alpha^{1/r-s-1}\lVert\phi\rVert_{\hbesov_{r}^s}
    =c\lVert\phi\rVert_{\hbesov_{r}^s}/\alpha^{s+1/\bar{r}}
    \label{eq:szego:besov:cutoff:limit}
  \end{equation}
  Intuitively, we would like to take $(r,\bar{r})=(1,\infty)$ to support that $g$
  is non--vanishing at infinity and \eqref{eq:szego:besov:cutoff:limit}
  implies then
  stricly--positive smoothness $s>0$
  such that \eqref{eq:szego:besov:mult:limit} is possible.
  But, from the theorem above we have to take into account all
  $s\in \{-\theta,-\sigma,\theta+1,\sigma+1\pm\delta\}$ where
  $\sigma>0$, $\theta\geq 0$ and $0<\delta\leq 1$. 
  Thus, the critical -- negative -- exponents for $1/\alpha$ 
  in \eqref{eq:szego:besov:cutoff:limit}
  are
  $s\in\{-\theta,-\sigma\}$ and therefore $(r,\bar{r})=(0,\infty)$ is not
  directly possible. More precisely, 
  the condition: 
  \begin{equation}
    \bar{r}< 1/\max(\theta,\sigma)
  \end{equation}
  is necessary such
  \eqref{eq:szego:besov:cutoff:limit}
  vanishes for $\alpha\rightarrow\infty$.
  
  Nevertheless, it possible to pose a dependency on $\alpha$ such that
  the limit can approached and, above a certain $\alpha_0$, all requirements of the theorem are
  fullfilled for each finite $\alpha\geq\alpha_0$.
  To this end we set $L=\log(\alpha)$ and choose $r=(L+1)/L>1$ and $\bar{r}=L+1$
  such that $(r,\bar{r})\rightarrow(1,\infty)$ for
  $\alpha\rightarrow\infty$. Next, we parametrize the smoothness
  parameters $\theta$, $\sigma$ and $\delta$. Fix some $\epsilon>1$. Then there
  exists $\alpha_0$ such that:
  \begin{equation}
    1\geq\delta=\theta=\sigma:=1/(L+1)-1/L^\epsilon>0\quad\text{for all}\quad\alpha\geq\alpha_0
  \end{equation}
  Summarizing:
  \begin{equation}
    (r,\bar{r},\theta,\sigma,\delta)\rightarrow(1,\infty,0,0,0)\quad\text{for}\quad\alpha\rightarrow\infty
  \end{equation}
\end{proof} 

\todo[inline]{Examples for 
  $\hbesov_{\infty,1}^{0}$ can be found in
  \cite{Sawada:2004b}. Further remarks are in 
  \cite{Sawada:2004c}
}

\subsection{Compositions on Besov Spaces $\besov_\infty^{s}$ for
  $s\geq 1$}
Here we discuss now how to handle the composition problem for
$\exp(i2\pi\nu\cdot)\circ\sigma(\cdot,\omega)$. We will use the
property that for each $\omega$ the function
$f:x\rightarrow\sigma(x,\omega)$ is bounded (since we already posed the
assumptions $\sigma\in C^3$ and $\sigma\in L_\infty$). This means that
$f\in\besov_\infty^s$ is in the inhomogeneous Besov space if
$f\in\hbesov_\infty^s$. Recall that in our case $s=1+\delta$ whereby
$\delta>0$ and the case $\delta=0$ is still open.

\begin{mylemma}
  Let $f:\Reals\rightarrow\Reals$ be real--valued and $f\in\besov_\infty^s$ for $s\geq 1$. Then there exists a
  constant $c>0$ depending on $f$ such that it holds:
  \begin{equation}
    \lVert \exp(i2\pi \nu f)\rVert_{\hbesov_\infty^{s}}\leq c(2\pi\nu)^s
  \end{equation}
  For $s=1$ the constant $c$ is of the form 
  $c=c'(1+\lVert f\rVert_{\besov_\infty^{1}})$ for some other constant $c'$
  (independent of $f$).
  \label{lemma:besov:composition}
\end{mylemma}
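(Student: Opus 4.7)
The plan is to reduce the composition estimate to a control of finite differences, using the equivalent norm on $\hbesov^s_{\infty,1}$ which holds, for any integer $N>s$, in the form
\begin{equation*}
\lVert g\rVert_{\hbesov^s_{\infty,1}}\leq C\int_0^\infty h^{-s}\,\omega_N(g,h)_\infty\,\frac{dh}{h},
\end{equation*}
where $\omega_N(g,h)_\infty:=\sup_x|\Delta_h^N g(x)|$ is the $N$-th uniform modulus of smoothness. The task is then to bound $\omega_N(\exp(i2\pi\nu f),h)_\infty$ in terms of moduli of $f$.

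For the critical case $s=1$ I would take $N=2$. Using the multiplicative identity $e^{i2\pi\nu f(x+h)}=e^{i2\pi\nu f(x)}e^{i2\pi\nu(f(x+h)-f(x))}$, the second difference factorises as
\begin{equation*}
\Delta_h^2[\exp(i2\pi\nu f)](x)=e^{i2\pi\nu f(x)}\bigl[(e^{i2\pi\nu u}-1)+(e^{i2\pi\nu v}-1)\bigr],
\end{equation*}
with $u=f(x+h)-f(x)$ and $v=f(x-h)-f(x)$. Since $|e^{iz}-1-iz|\leq z^2/2$ for real $z$ and $u+v=\Delta_h^2 f(x)$, this yields
\begin{equation*}
\omega_2(\exp(i2\pi\nu f),h)_\infty\leq 2\pi\nu\,\omega_2(f,h)_\infty+2\pi^2\nu^2\,\omega_1(f,h)_\infty^2,
\end{equation*}
complemented by the trivial bound $\omega_2(\exp(i2\pi\nu f),h)_\infty\leq 4$.

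I would then insert these bounds into the integral. The first term immediately contributes $C\cdot 2\pi\nu\,\lVert f\rVert_{\hbesov^1_{\infty,1}}$. For the quadratic term I use $\omega_1(f,h)_\infty\leq\min(\lVert f'\rVert_\infty h,\,2\lVert f\rVert_\infty)$, legitimate by the embedding $\besov^1_{\infty,1}\hookrightarrow C^1$, split the integration at $h_0:=(2\pi\nu\lVert f'\rVert_\infty)^{-1}$, and switch to the trivial bound $4$ for $h>h_0$. Both pieces turn out to be of order $\nu\lVert f'\rVert_\infty$, which is in turn controlled by $\nu\lVert f\rVert_{\besov^1_{\infty,1}}$, so collecting terms yields the asserted bound with $c=c'(1+\lVert f\rVert_{\besov^1_{\infty,1}})$.

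For $s>1$ I would either work with $N=\lceil s\rceil+1$, or equivalently use the lifting $\lVert g\rVert_{\hbesov^s_{\infty,1}}\approx\lVert g^{(\lfloor s\rfloor)}\rVert_{\hbesov^{s-\lfloor s\rfloor}_{\infty,1}}$ together with Faà di Bruno: every $k$-th derivative of $\exp(i2\pi\nu f)$ is a finite sum of terms $(i2\pi\nu)^j\bigl(\prod_\ell (f^{(\ell)})^{m_\ell}\bigr)\exp(i2\pi\nu f)$ with $\sum_\ell m_\ell=j$ and $\sum_\ell \ell\, m_\ell=k$, so the highest power of $\nu$ that appears through order $s$ is exactly $\nu^s$. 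The main obstacle is combinatorial bookkeeping: each Faà di Bruno summand must be estimated in the appropriate Besov norm via product estimates such as \eqref{eq:hbesov:algebra} combined with the $s=1$ bound already obtained for the pure phase $e^{i2\pi\nu f}$, and all lower powers of $\nu$ must be absorbed into an overall constant so that the final bound reads $c(2\pi\nu)^s$ as claimed.
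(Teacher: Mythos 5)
Your argument takes a genuinely different route from the paper. The paper treats composition as a black box: it invokes Bourdaud's theorem (Theorem 4 in \cite{Bourdaud2002}), which gives $\lVert G\circ f\rVert_{\besov^s_{\infty,1}}\leq \lVert G\rVert_{\besov^s_{\infty,1}}\,\psi(\lVert f\rVert_{\besov^s_{\infty,1}})$ for $G,f\in\besov^s_{\infty,1}$, applies it with $G=\cos$ and $G=\sin$, and pulls out the factor $(2\pi\nu)^s$ by the dilation/homogeneity property \eqref{eq:hbesov:homogenity} of the outer phase $\exp(i2\pi\nu\cdot)$; the affine constant $c=c'(1+\lVert f\rVert_{\besov^1_{\infty,1}})$ at $s=1$ is read off Bourdaud's Theorem 5, which gives $\psi(x)=c'(1+x)$. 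You instead work from the modulus-of-smoothness characterization of $\hbesov^s_{\infty,1}$, factor the second difference of $e^{i2\pi\nu f}$, and use $|e^{iz}-1-iz|\leq z^2/2$ together with $u+v=\Delta_h^2 f$; the split at $h_0=(2\pi\nu\lVert f'\rVert_\infty)^{-1}$ and the embedding $\besov^1_{\infty,1}\hookrightarrow C^1_b$ (valid precisely in the critical $q=1$ case) close the $s=1$ estimate and recover the affine constant directly. Both arguments are correct. What your route buys is self-containedness and an explicit, elementary proof of the $s=1$ statement; what it costs shows up at $s>1$, where the paper's citation of Bourdaud handles the whole range $s\geq 1$ uniformly, while your Fa\`a di Bruno route needs the algebra/product estimate \eqref{eq:hbesov:algebra} and nontrivial bookkeeping to verify that the lower powers of $\nu$ really collapse into $c(2\pi\nu)^s$; as written that part is only a plausible sketch, not a proof.
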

\begin{proof}
  For $s>1$ and $1\leq q\leq\infty$ we have from Theorem 4 in
  \cite{Bourdaud2002} that composition operator $T_G: f\rightarrow G\circ f$ fulfills
  $T_G(\besov_{\infty,q}^s)\subseteq \besov_{\infty,q}^s$ if and only if $G\in\besov_{\infty,q}^s$.
  More precily, there
  exists a continouos increasing function
  $\psi:\Reals_+\rightarrow\Reals_+$ such that it holds:
  \begin{equation}
    \lVert G\circ f\rVert_{\besov_{\infty,1}^s}\leq
    \lVert G\rVert_{\besov_{\infty,1}^s}\psi(\lVert f\rVert_{\besov_{\infty,1}^s})
    \label{eq:besov:composition:bourdaud}
  \end{equation}
  for all $f,G\in\besov_{\infty,1}^s$. 
  We have:
  \begin{equation}
    \begin{split}
      \lVert \exp(i2\pi \nu f)\rVert_{\hbesov_{\infty,1}^s}
      &\leq
      \lVert\cos(2\pi \nu f)\rVert_{\hbesov_{\infty,1}^s}+\lVert\sin(2\pi \nu f)\rVert_{\hbesov_{\infty,1}^s}\\
      &\overset{\eqref{eq:hbesov:homogenity}}{\leq} (2\pi\nu)^s(
      \lVert\cos(f)\rVert_{\besov_{\infty,1}^s}+\lVert\sin(f)\rVert_{\besov_{\infty,1}^s}) \\
      &\overset{\eqref{eq:besov:composition:bourdaud}}{\leq} (2\pi\nu)^s(
      \lVert\cos\rVert_{\besov_{\infty,1}^s}+\lVert\sin\rVert_{\besov_{\infty,1}^s})\psi(\lVert f\rVert_{\besov_{\infty,1}^s})
    \end{split}
  \end{equation}
  where we switched to the inhomogeneous Besov spaces since $\cos$ and
  $\sin$ are bounded. From Theorem $5$ in \cite{Bourdaud2002} it
  follows also that $\psi(x)=c'(1+x)$ for $s=1$.
\end{proof}
  
\subsection{Combining the Results}
Here we will now combine the results so far. The following theorem is
not the most general combination of the previous results. Instead we
have preferred for the moment a straightforward enumeration of the statements.
\begin{mytheorem} 
  Let $\sigma$ be a real--valued symbol with
  (i) $\sigma(\cdot,\omega)\in\besov_\infty^s$ for $s>1$ uniformely in
  $\omega$ and (ii) $\sigma(x,\cdot)\in L_1$ uniformely in $x$.
  Then it holds:
  \begin{equation}
    \lim_{\alpha\rightarrow\infty}Q_\alpha(\nu)/\alpha=0
  \end{equation}
  \label{thm:besov:traceclass:updated}.
\end{mytheorem}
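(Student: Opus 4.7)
The plan is to use inequality \eqref{eq:szego:qalpha1}, so that proving $\lim_{\alpha\to\infty} Q_\alpha(\nu)/\alpha = 0$ reduces to showing $\lVert PTP\rVert_{\schattenclass_1}/\alpha\to 0$ and $\lVert PT'P\rVert_{\schattenclass_1}/\alpha\to 0$ (the remainder $c_2\sqrt{\alpha(1+\log\alpha)}/\alpha$ vanishes trivially). First I would apply the decomposition \eqref{eq:szego:PTP} to replace the trace norms of $T$ and $T'$ by $\omega$-integrals of $\lVert PT_\omega P\rVert_{\schattenclass_1}$ and $\lVert PT'_\omega P\rVert_{\schattenclass_1}$, where $T_\omega$, $T'_\omega$ are the operators with kernels of the form \eqref{eq:szego:tkernels:form} obtained after one integration by parts in $\omega$ (justified by $\sigma\in C^3$, with integrability of the outer $\omega$-integral supplied by assumption (ii)).

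Next, the Schur-multiplier manipulations of Section \ref{sec:szego:approx:product} together with the Fourier-side computation from \cite{Janson1988} identify each restricted operator $PT_\omega P$ (and similarly $PT'_\omega P$) as a paracommutator in the sense of \eqref{eq:szego:def:paracomm} whose symbol $b$ is given by \eqref{eq:szego:b}: in case $(T)$, $b(x)=\phi_\alpha(x)\exp(i2\pi\nu\sigma(x,\omega))$; in case $(T')$, $b(x)=\phi_\alpha(x)\,\partial_\omega\exp(i2\pi\nu\sigma(x,\omega))$. Theorem \ref{thm:besov:traceclass} with $p=1$ then yields $\lVert PT_\omega P\rVert_{\schattenclass_1}\leq C\lVert b\rVert_{\hbesov_1}$, so the whole problem reduces to controlling $\lVert b\rVert_{\hbesov_1}/\alpha$. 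For case $(T)$, Lemma \ref{lemma:besov:composition} applied at $s=1$ --- available because $\sigma(\cdot,\omega)\in\besov_\infty^s\subseteq\besov_\infty^1$ uniformly in $\omega$ by assumption (i) --- gives $\exp(i2\pi\nu\sigma(\cdot,\omega))\in\hbesov_\infty^1$ with norm bounded uniformly in $\omega$; Lemma \ref{lemma:besov:pointwise} then produces the pointwise-in-$\omega$ limit $\frac{1}{\alpha}\lVert\phi_\alpha\cdot\exp(i2\pi\nu\sigma(\cdot,\omega))\rVert_{\hbesov_1}\to 0$. Case $(T')$ is treated analogously after writing $\partial_\omega\tau=i2\pi\nu\,\sigma_\omega\,\tau$ and invoking the Besov algebra property \eqref{eq:hbesov:algebra} together with the $C^3$ smoothness of $\sigma$.

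Finally, dominated convergence in $\omega$ would let me interchange the $\alpha$-limit with the $\omega$-integration introduced in the first step. The main obstacle I anticipate is precisely this interchange: producing an $\omega$-integrable majorant for $\frac{1}{\alpha}\lVert PT_\omega P\rVert_{\schattenclass_1}$ that is uniform in $\alpha$, so that the qualitative pointwise decay delivered by Lemmas \ref{lemma:besov:pointwise} and \ref{lemma:besov:composition} survives the outer integration. Assumption (ii) together with the $C^3$-regularity of $\sigma$ in $\omega$ are the natural ingredients, but verifying that they combine to produce the required majorant --- particularly in case $(T')$, where the extra $\partial_\omega$ activates $\sigma_\omega$ and may demand an auxiliary bound not explicitly listed in the hypotheses --- is the non-routine analytic step and likely the crux of the argument.
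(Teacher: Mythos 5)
Your proposal follows essentially the same route as the paper's own proof: reduce via \eqref{eq:szego:qalpha1} and \eqref{eq:szego:PTP} to the $\omega$-wise trace norms, identify $PT_\omega P$, $PT'_\omega P$ with paracommutators of symbol $b$ from \eqref{eq:szego:b}, bound via Theorem~\ref{thm:besov:traceclass}, and then invoke Lemma~\ref{lemma:besov:composition} (composition) followed by Lemma~\ref{lemma:besov:pointwise} (cutoff) to obtain the pointwise-in-$\omega$ decay of $\lVert b\rVert_{\hbesov_1}/\alpha$. The paper's written proof is itself only a terse sketch of exactly this chain, so you have reconstructed it faithfully.

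Two remarks. First, you propose applying Lemma~\ref{lemma:besov:composition} at $s=1$ (via the inhomogeneous embedding $\besov_\infty^s\subseteq\besov_\infty^1$), which is the natural move since Lemma~\ref{lemma:besov:pointwise} needs $g\in\hbesov_\infty^1$ rather than $\hbesov_\infty^s$ and the homogeneous scale does not nest in $s$; the paper instead insists on $s>1$ in the hypotheses and explicitly flags $s=1$ as open, reflecting that the composition bound of Lemma~\ref{lemma:besov:composition} rests on Bourdaud's Theorem~4 only for $s>1$, with the endpoint requiring the separate Theorem~5. Your route is thus slightly more optimistic about the endpoint than the paper allows itself to be. Second, the obstacle you correctly single out --- producing an $\omega$-integrable majorant uniform in $\alpha$ so that dominated convergence licenses interchanging $\lim_\alpha$ with $\int d\omega$ --- is genuinely not discharged in the paper either; the proof only remarks that ``the convergence of the integrals is ensured by sufficient decay of $\sigma(x,\omega)$ in $\omega$,'' which justifies finiteness of the integral at fixed $\alpha$ but not the interchange. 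So your diagnosis of where the hard analytic work lies is accurate, and it is a gap shared with the paper's own argument rather than a defect peculiar to your reconstruction.
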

Again, at this point it is open whether the same can be obtained for $s=1$.
\begin{proof}
  Recall that from \eqref{eq:szego:qalpha1} we have:
  \begin{equation}
    \begin{split}
      Q_\alpha(\nu)
      \leq
      c_1\lVert PTP\rVert_{\schattenclass_1}      
      +\lVert PT'P\rVert_{\schattenclass_1}+c_2\sqrt{\alpha(1+\log(\alpha))}
    \end{split}
    \label{eq:szego:qalpha2}
  \end{equation}
  and from \eqref{eq:szego:PTP}
  \begin{equation}
    \lVert PTP\rVert_{\schattenclass_1}\leq \int \lVert PT_\omega P\rVert_{\schattenclass_1}  d\omega\quad(\text{same for }T')
    \label{eq:szego:PTP:1}
  \end{equation}
  whereby the convergence of the integrals is ensured by sufficient
  decay of $\sigma(x,\omega)$ in $\omega$.
  For each $\omega$ the operators $T_\omega$ and $T'_\omega$ can be
  replaced by smoothed para--commutators $T_b$ of the form 
  \eqref{eq:szego:def:paracomm} with
  $A(\xi,\eta)=\fourier{a}(\eta)-\fourier{a}(\xi)$
  and $a(z)=\phi(z/(2\alpha))/z$ and symbol $b=b(\alpha,\sigma)$
  depending on $\alpha$ and $\sigma$ according to formula \eqref{eq:szego:b}.
  We apply now Lemma \ref{lemma:besov:pointwise} and Lemma
  \ref{lemma:besov:composition} to $T_b$ for each $\omega$. 
\end{proof}

\if0
\section{Robustness against Interference}
Let be $I_K$ a finite index set of cardinality $K$, i.e. $|I_K|=K$ and $\ell_2^K:=\ell_2(I_K)$ the 
$K$--dimensional subspace of $\ell_2$--sequences vanishing outside of $I_K$. The set $I_K$ 
will depend on the restricted channel operator $HP$, the additive noise and the power constraint $\pw$
and is therefore assumed to be known to the transmitter and the receiver.
A code $(M,\lambda)$--code $(x^{(i)},B_i)_{i=1}^M$ with parameters $\alpha$ and $\pw$ is a set of $M$ signals 
$x^{(i)}\in L_2(\alpha\Omega)$
with support in $\alpha\Omega$ and energy $\lVert x^{(i)}\rVert_{L_2}^2\leq\pw\alpha$ and disjoint
Borel sets $B_i\subset\ell_2(I_K)$.
The functions $\{u_k\}$ form an ONB for $L_2(\alpha\Omega)$ such that each $x\in L_2(\alpha\Omega)$ 
can expressed as:
\begin{equation}
   x=\sum_k x_{k}u_k
\end{equation}
with $L_2$--convergence. Thus, we can equivalently take $x$ also as a $\ell_2$--sequence with
components $x_k=\langle u_k,x\rangle$ and $\lVert x\rVert^2_{L_2}=\lVert x\rVert^2_{\ell_2}$.
Let us take first $r_k=v_k$. 
\subsection{The Preliminary Model}
We will consider first our preliminary model: We assume in this section that signal $x^{(i)}$ is transmitted.
The receiver would observe
$K$ components of the sequence $y=a^{(i)}+n$ with
$a^{(i)}:=\{\langle r_k,Hx^{(i)}\rangle\}_{k\in I_K}$ and $n:=\{n_k\}_{k\in I_K}$.
The probability in decoding $x^{(i)}$ in error is $\Pr\{y\notin B_i\}$. 
Since the ONB $\{v_k\}$ spans the range of $HP$ any output $y$ that should
depend on $x\in\ell_2(I_K)$ can be written in terms of components:
\begin{equation}
   y_{k}=s_{k}x_{k}+n_k
\end{equation}
for $k\in I_K$, i.e. $y\in\ell_2(I_K)$ since $H$ is bounded. We assume that $n$ has a density
$p(y|x)$. 

We will use the random
coding technique, i.e. ensembles of codes such that codewords $x\in\ell_2(I_K)$ occur independently and with
density $\tilde{q}(x)$. We consider only codewords with  energy between $\alpha S-\delta$ and $\alpha S$
such that:
\begin{equation}
   \begin{split}
      \tilde{q}(x):
      &=\frac{q(x)}{\mu}\chi_{\{\alpha S-\delta\leq\lVert x\rVert_{\ell_2}^2\leq\alpha S\}}(x) \\
      &\leq \frac{q(x)}{\mu}e^{r(\delta+\lVert x\rVert_{\ell_2}^2-\alpha S)}=:\frac{q(x)}{\mu}w(x)
   \end{split}
\end{equation}
for any $r>0$ where $\mu:=\int_{\{\alpha S-\delta\leq\lVert x\rVert_{\ell_2}^2\leq\alpha S\}} q(x)dx$
and integration is meant from now on over $\ell_2(I_K)$. 
The error probability averaged over ensembles of codes of size $M\leq\exp(R\alpha)+1$ is:
\begin{equation}
   \begin{split}
      \bar{P}_e
      &\leq (M-1)^\rho\int\left(\int Q\tilde{q}(x)p(y|x)^{\tfrac{1}{1+\rho}}dx\right)^{1+\rho}dy
      \leq \exp(-\alpha\cdot E(R,\rho))
   \end{split}
\end{equation}
where $E(\rho,R)=E_0(\rho)-\rho R$ and the error exponent is:
\begin{equation}
   \begin{split}
      E_0(\rho)
      &=-\frac{1}{\alpha}\log\int\left(\int \tilde{q}(x)p(y|x)^{\tfrac{1}{1+\rho}}dx\right)^{1+\rho}dy \\
      &\geq -\frac{1}{\alpha}\left((1+\rho)\log(\mu)+
        \log\int\left(\int w(x)q(x)p(y|x)^{\tfrac{1}{1+\rho}}dx\right)^{1+\rho}dy\right)
   \end{split}
\end{equation}
Let us take $q(x)=\prod_{k\in I_K} q_k(x_{k})$ and $p(y|x)=\prod_{k\in I_K} p_k(y_k|x_k)$ and
non--negative numbers $\mu_k$ with $\sum_{k\in I_K}\mu_k=\alpha S$.
Then:
\begin{equation}
   \begin{split}
      E_0(\rho)
      &\geq
      -\frac{(1+\rho)(\log(\mu)+\delta)}{\alpha}+
      \frac{1}{\alpha}\sum_{k\in I_k} E_k(\rho,r)\\
   \end{split}
\end{equation}
with:
\begin{equation}
   \begin{split}
      E_k(\rho,r)
      &:=(1+\rho)r\mu_k
      -\log\int_\Reals\left(\int_\Reals
        e^{r|x|^2}q_k(x)p_k(y|x)^{\tfrac{1}{1+\rho}}dx\right)^{1+\rho}dy\\
      &=:(1+\rho)r\mu_k+F_k(\rho,r)
   \end{split}
\end{equation}
Next, we use the Gaussian model:
\begin{equation}
   \begin{split}
      q_k(x)
      &=(2\pi\mu_k)^{-\tfrac{1}{2}}\cdot e^{-\frac{x^2}{2\mu_k}}\\
      p_k(y|x)
      &=(2\pi\sigma_k)^{-\tfrac{1}{2}}\cdot e^{-\frac{(y-s_kx)^2}{2\sigma_k}}
   \end{split}
\end{equation}
\newcommand{\esigma}{\tilde{\sigma}}
Let $\esigma_k=\sigma_k/s_k^2$ be the effective noise power and we get:
\begin{equation}
   \begin{split}
      F_k(\rho,r)
      &=
      -\log\int_\Reals\left(\int_\Reals
        e^{r|x|^2-\tfrac{x^2}{2\mu_k}-\tfrac{(y-s_kx)^2}{2\sigma_k(1+\rho)}}\frac{dx}{\sqrt{2\pi\mu_k}}
      \right)^{1+\rho}\frac{dy}{\sqrt{2\pi\sigma_k}}\\
      &=
      -\log\int_\Reals\left(\int_\Reals
        e^{-(1-2\mu_k r)x^2-\tfrac{(y-\sqrt{\mu_k/\esigma_k}x)^2}{1+\rho}}\frac{dx}{\sqrt{\pi}}
      \right)^{1+\rho}\frac{dy}{\sqrt{\pi}}\\
      &=
      -\log\int_\Reals\left(\int_\Reals
        e^{-\frac{1}{1+\rho}\left((1+\rho)(1-2\mu_k r)x^2-(y-\sqrt{\mu_k/\esigma_k}x)^2\right)}\frac{dx}{\sqrt{\pi}}
      \right)^{1+\rho}\frac{dy}{\sqrt{\pi}}\\
   \end{split}
\end{equation}
We set $a=(1+\rho)(1-2\mu_k r)+b^2$ and $b=\sqrt{\mu_k/\esigma_k}$. If $a>0$ we can complete the square in the exponent
$(ax^2+y^2-2bxy)^2=(1-\tfrac{b^2}{a})y^2+(\tfrac{b}{\sqrt{a}}y-\sqrt{a}x)^2$ to achieve:
\begin{equation}
   \begin{split}
      F_k(\rho,r)
      &= -\log\int_\Reals\frac{dy}{\sqrt{\pi}}e^{-(1-\tfrac{b^2}{a})y^2}\left(\int_\Reals e^{-\frac{ax^2}{1+\rho}}\frac{dx}{\sqrt{\pi}}\right)^{1+\rho}\\
      &= \frac{1}{2}\log(\tfrac{a-b^2}{a}\cdot(\tfrac{a}{1+\rho})^{1+\rho})
      = \frac{1}{2}\log(\tfrac{(1+\rho)(1-2\mu_k r)}{a}\cdot(\tfrac{a}{1+\rho})^{1+\rho})
      = \frac{1}{2}\log((1-2\mu_k r)\cdot(\tfrac{a}{1+\rho})^{\rho})\\
      &=\frac{1}{2}\log(1-2\mu_k r)+\tfrac{\rho}{2}\log(1-2\mu_kr+\tfrac{\mu_k}{(1+\rho)\esigma_k})
   \end{split}
\end{equation}
and finally formula (7.5.16) in \cite[p.346]{gallager:inftheo} (or already (7.4.21) on p. 338) with effective noise variances $\esigma_k$:
\begin{equation}
   \begin{split} 
      E_k(\rho,r)=(1+\rho)r\mu_k+\frac{1}{2}\log(1-2\mu_k r)+\tfrac{\rho}{2}\log(1-2\mu_kr+\tfrac{\mu_k}{(1+\rho)\esigma_k})
   \end{split}
\end{equation}
which is valid only for $a>0$. If $a\leq 0$ then necessarily $1-2\mu_k r<0$ (see here discussion below (7.4.21) in \cite[p. 338]{gallager:inftheo}).

According to Kadota \cite{kadota:gaussian:reception}
the minimum distance decoder is optimal for white ($\sigma_k=\sigma$ for all $k$) Gaussian noise, i.e. for $K$ received 
components in $I_K$ of a sequence $y$ the decoder decides for $x_j$ if
$\lVert y-a^{(j)}\rVert_{\ell^K_2}<\lVert y-a^{(i)}\rVert_{\ell^K_2}$ for all $i\neq j$, meaning that
for $\delta_{ij}:=a^{(j)}-a^{(i)}$:
\begin{equation}
   \begin{split} 
      \Pr\{y\in B_j\}
      &\geq
      \Pr\{\lVert n-\delta_{ij}\rVert_{\ell^K_2}<\lVert n\rVert_{\ell^K_2}\}\\
      &=\Pr\{\tfrac{1}{2}\lVert \delta_{ij}\rVert^2_{\ell^K_2}<\Re{\langle n,\delta_{ij}\rangle}\}\\
      &=\Pr\{\tfrac{1}{2}\lVert \delta_{ij}\rVert_{\ell^K_2}<
      \Re{\langle n,\frac{\delta_{ij}}{\lVert \delta_{ij}\rVert_{\ell^K_2}}\rangle}\}\\
      &=\int_{\tfrac{1}{2}\lVert \delta_{ij}\rVert_{\ell^K_2}}^\infty e^{-\frac{\eta^2}{2}}\frac{d\eta}{\sqrt{2\pi}}
      =:Q(\tfrac{1}{2}\lVert \delta_{ij}\rVert_{\ell^K_2})
   \end{split}
\end{equation}
The probability of decoding error obeys therefore the inequality:
\begin{equation}
   \begin{split}
      e^{-\alpha E(R)}
      &\geq\Pr\{y\notin B_i\}=
      \Pr\bigcup_{i\neq j}\{y\in B_j\}\geq
      Q(\tfrac{1}{2}\lVert \delta_{ij}\rVert_{\ell^K_2})
   \end{split}
\end{equation}
for any $i\neq j$ and $Q(x)=\text{erfc}(x/\sqrt{2})/2$. Using the lower bound 
\cite[eq. (38)]{deabreu:qfunction09} $Q(t)>\sqrt{2/\pi}\tfrac{\exp(-t^2/2)}{t+\sqrt{t^2+4}}$
we get with $\sqrt{t^2+4}\leq t+2$ the relation:
\begin{equation}
   \begin{split}
      \alpha E(R)
      &<\frac{t^2}{2}+\log(t+\sqrt{t^2+4})+\log\sqrt{\tfrac{\pi}{2}}\\
      &\leq\frac{t^2}{2}+\log(t+1)+\log\sqrt{2\pi}\leq c\cdot t^2
   \end{split}
\end{equation}
and as a result $\lVert a^{(i)}-a^{(j)}\rVert^2_{\ell^K_2}\geq c'\cdot \alpha E(R)$.
\subsection{Consecutive Channel Uses}
In consecutive uses of the same channel $H$ the receiver would observe
intersymbol interference. 
We shall derive a condition on an additional distortion $d$ such that it does not affect the
minimum distance decoder. 
First, we take a code $(x^{(i)},B_i)_{i=1}^M$ for the preliminary model for block length $\alpha$
and power $S$, i.e. $M=\exp(R\alpha)$.
However, we transmit in $\alpha\Omega$ with increased power $\gamma^2\pw$ and append
a (zero) guard period of size $\delta\alpha$. 
To fulfill the original average power constraint $\pw$ we require
$\gamma^2\leq (1+\delta)$. Thus, the derived code $(\gamma\cdot x^{(i)},B_i^*)$ 
has rate $R/(1+\delta)$. 
\begin{equation} 
   \begin{split}
      y_k^{*}
      =:&\gamma a_k^{(i)}+n_k+d_k\\
   \end{split}
\end{equation}

However, the functionals $\langle r_k,\cdot\rangle$ still have support only in $\alpha\Omega$. 
The receiver is informed about the value of $\gamma$ and $\delta$ and
adapts the decoding sets $B^*_j$ accordingly: If
$\lVert d\rVert_{\ell_2^K}\leq \tfrac{\gamma-1}{2}\lVert \delta_{ij}\rVert_{\ell_2^K}$ the 
particular error events for $i\neq j$:
\begin{equation} 
   \begin{split}
      & \{\lVert y^*-\gamma a^{(j)}\rVert_{\ell^K_2}<\lVert y^*-\gamma a^{(i)}\rVert_{\ell^K_2}\}
      =
      \{\tfrac{\gamma}{2}\lVert \delta_{ij}\rVert_{\ell^K_2}<
      \Re{\langle n+d,\frac{\delta_{ij}}{\lVert \delta_{ij}\rVert_{\ell^K_2}}\rangle}\}\\
      &\subseteq
      \{\tfrac{\gamma}{2}\lVert \delta_{ij}\rVert_{\ell^K_2}<
      \Re{\langle n,\frac{\delta_{ij}}{\lVert \delta_{ij}\rVert_{\ell^K_2}}\rangle+\lVert d\rVert_{\ell_2^K}}\}
      =
      \{\tfrac{1}{2}\lVert \delta_{ij}\rVert_{\ell^K_2}
      (\gamma-\frac{2\lVert d\rVert_{\ell_2^K}}{\lVert \delta_{ij}\rVert_{\ell^K_2}})<
      \Re{\langle n,\frac{\delta_{ij}}{\lVert \delta_{ij}\rVert_{\ell^K_2}}\rangle}\}\\
      &\subseteq
      \{\tfrac{1}{2}\lVert \delta_{ij}\rVert_{\ell^K_2}<
      \Re{\langle n,\frac{\delta_{ij}}{\lVert \delta_{ij}\rVert_{\ell^K_2}}\rangle}\}
      \subseteq \{y \in B_j\}
   \end{split}
\end{equation}
are contained in the error event without distortion $d$ and normal power $\pw$
and therefore:
\begin{equation} 
   \begin{split} 
      \Pr\{y^*\notin B^*_i\}\leq
      \Pr\{y \notin B_i\}
   \end{split}
 \end{equation}
 \fi
\section{Conclusion}
A new approach to the capacity of time--continuous doubly--dispersive Gaussian channels
with periodic symbol has been established by proving a Szeg\"o asymptotic 
for certain pseudo--differential operators. The result holds once the
symbol $\sigma(x,\omega)$ of the channel correlation operator
$L_\sigma$ has sufficient decay in frequency $\omega$ and the
time--oscillation have finite $\besov_\infty^s$ Besov norm for $s>1$,
both is meant in a uniform sense.

\section*{Acknowledgment}
This work is supported by Deutsche Forschungsgemeinschaft (DFG) grant JU 2795/1-1.
The author would like to thank Holger Boche, Igor Bjelakovic,
Winfried Sickel and Jan Vybiral.

\if0

\section*{Appendix}

\subsection{OUTDATED (16.7.2015)} 

\begin{equation}
  \begin{split}
    \traceP L_{g_\epsilon(\sigma)}
    &=\int_{\alpha\Omega\times\Reals} g_\epsilon(\sigma(x,\omega))dxd\omega
    =\int_{\alpha\Omega\times\Reals} g_\epsilon(\sigma(z))d\mu(z)\\
    &=\int_0^1\mu\{g_\epsilon\circ\sigma\geq\lambda\}d\lambda
  \end{split}
\end{equation}
We will now (i) rewrite the level sets in $\alpha\Omega\times\Reals$
above as:
\begin{equation}
  \begin{split}
    \{g_\epsilon\circ\sigma\geq\lambda\}
    &=\{1\leq \sigma\leq g_\epsilon^{-1}(\lambda)\}
    =\{1\leq \sigma\leq g_\epsilon^{-1}(\lambda)\}\\
    &=\{1\leq \sigma\leq 1+\epsilon\phi^{-1}(1-\lambda)\}
    =\{0\leq \sigma-1\leq \epsilon\phi^{-1}(1-\lambda)\}
  \end{split}
\end{equation}
and (ii) we perform a smooth change of the integration variable
$\xi=\phi^{-1}(1-\lambda)$. The result is:
\begin{equation}
  \begin{split}
    \traceP L_{g_\epsilon(\sigma)}    
    &=\int_0^1\mu\{0\leq \sigma-1\leq \epsilon\xi\}\phi'(\xi)d\xi
    \leq\int_0^1\left(\int_{\{0\leq \sigma-1\leq \epsilon\xi\}}
      \sigma d\mu\right)\phi'(\xi)d\xi\\
    &\overset{(i)}{\leq}\int_0^1\left(\int_1^{1+\epsilon\xi}
      \mu\{\sigma\geq t\}dt\right)\phi'(\xi)d\xi
    \overset{(ii)}{\leq}\int_0^1\left(
      \mu\{\sigma\geq 1\}\int_1^{1+\epsilon\xi}dt\right)\phi'(\xi)d\xi\\
    &\leq 
    \epsilon\cdot \mu\{|\sigma|\geq 1\}\int_0^1\phi'(\xi)\xi d\xi 
    \leq c\epsilon\alpha
  \end{split}
\end{equation}
In step (i) we used again the layer cake representation and (ii)
follows since $\mu\{\sigma\geq t\}$ is decreasing in $t$.
\fi



\bibliographystyle{IEEEtran}
\bibliography{library}

\begin{thebibliography}{10}
\providecommand{\url}[1]{#1}
\csname url@samestyle\endcsname
\providecommand{\newblock}{\relax}
\providecommand{\bibinfo}[2]{#2}
\providecommand{\BIBentrySTDinterwordspacing}{\spaceskip=0pt\relax}
\providecommand{\BIBentryALTinterwordstretchfactor}{4}
\providecommand{\BIBentryALTinterwordspacing}{\spaceskip=\fontdimen2\font plus
\BIBentryALTinterwordstretchfactor\fontdimen3\font minus
  \fontdimen4\font\relax}
\providecommand{\BIBforeignlanguage}[2]{{%
\expandafter\ifx\csname l@#1\endcsname\relax
\typeout{** WARNING: IEEEtran.bst: No hyphenation pattern has been}%
\typeout{** loaded for the language `#1'. Using the pattern for}%
\typeout{** the default language instead.}%
\else
\language=\csname l@#1\endcsname
\fi
#2}}
\providecommand{\BIBdecl}{\relax}
\BIBdecl

\bibitem{jung:isit2011}
\BIBentryALTinterwordspacing
P.~Jung, ``{On the Szeg{\"{o}}--Asymptotics for Doubly-Dispersive Gaussian
  Channels},'' in \emph{The 2011 IEEE International Symposium on Information
  Theory (ISIT), St. Petersburg, Russia}, 2011, pp. 2791 -- 2795. [Online].
  Available: \url{http://arxiv.org/abs/1102.5253
  http://ieeexplore.ieee.org/xpls/abs{\_}all.jsp?arnumber=6034082}
\BIBentrySTDinterwordspacing

\bibitem{Holsinger1964}
\BIBentryALTinterwordspacing
J.~L. Holsinger, ``{Digital Communication over Fixed Time-continuous Channels
  with Memory, with Special Application to Telephone Channels},'' Massachusetts
  Institute of Tech. Lexington Lincoln LAB, Tech. Rep., 1964. [Online].
  Available:
  \url{http://dspace.mit.edu/bitstream/1721.1/4395/1/RLE-TR-430-04743335.pdf}
\BIBentrySTDinterwordspacing

\bibitem{gallager:inftheo}
R.~G. Gallager, \emph{{Information Theory and Reliable Communication}}.\hskip
  1em plus 0.5em minus 0.4em\relax John Wiley {\&} Sons, Inc., 1968.

\bibitem{Kadota1972}
\BIBentryALTinterwordspacing
T.~T. Kadota and a.~D. Wyner, ``{Coding Theorem for Stationary, Asymptotically
  Memoryless, Continuous-time Channels},'' \emph{The Annals of Mathematical
  Statistics}, vol.~43, no.~5, pp. 1603--1611, oct 1972. [Online]. Available:
  \url{http://projecteuclid.org/euclid.aoms/1177692392}
\BIBentrySTDinterwordspacing

\bibitem{Kac53}
M.~Kac, W.~L. Murdock, and G.~Szego, ``{On the eigenvalues of certain Hermitian
  forms},'' \emph{J. Rational Mech. Anal.}, vol.~2, pp. 767--800, 1953.

\bibitem{wyner:bandlimited:capacity}
A.~D. Wyner, ``{The Capacity of the Band--Limited Gaussian Channel},''
  \emph{Bell Sys. Tech. J.}, vol.~45, pp. 359--395, 1966.

\bibitem{cordaro:intersymbol:gaussian}
J.~T. Cordaro, T.~Wagner, and J.~Wagner, ``{Intersymbol interference on a
  continuous-time Gaussian channel},'' \emph{Information Theory, IEEE
  Transactions on}, vol.~16, no.~4, pp. 422--429, jul 1970.

\bibitem{Widom1982}
\BIBentryALTinterwordspacing
H.~Widom, ``{On a class of integral operators with discontinuous symbol},''
  \emph{Operator Theory: Advances and Applications (Toeplitz Memorial
  Conference in Operator Theory 1981)}, vol.~4, pp. 477--500, 1982. [Online].
  Available:
  \url{http://scholar.google.com/scholar?hl=en{\&}btnG=Search{\&}q=intitle:On+a+Class+of+integral+Operators+with+Discontinuous+Symbol{\#}1}
\BIBentrySTDinterwordspacing

\bibitem{Hormander1979}
\BIBentryALTinterwordspacing
L.~H{\"{o}}rmander, ``{On the asymptotic distribution of the eigenvalues of
  pseudodifferential operators in R n},'' \emph{Arkiv f{\"{o}}r matematik},
  vol.~17, no. 1-2, pp. 297--313, dec 1979. [Online]. Available:
  \url{http://www.springerlink.com/index/10.1007/BF02385475}
\BIBentrySTDinterwordspacing

\bibitem{Sobolev2010}
\BIBentryALTinterwordspacing
A.~V. Sobolev, ``{Quasi-classical asymptotics for pseudodifferential operators
  with discontinuous symbols: Widom's conjecture},'' \emph{Functional Analysis
  and its Applications}, vol.~44, no.~4, pp. 313--317, 2010. [Online].
  Available: \url{http://arxiv.org/pdf/1004.2576}
\BIBentrySTDinterwordspacing

\bibitem{Hammerich2009}
\BIBentryALTinterwordspacing
E.~Hammerich, ``{On the heat channel and its capacity},'' in \emph{Information
  Theory, 2009. ISIT 2009. IEEE International Symposium on}.\hskip 1em plus
  0.5em minus 0.4em\relax IEEE, 2009, pp. 1809--1813. [Online]. Available:
  \url{http://ieeexplore.ieee.org/xpls/abs{\_}all.jsp?arnumber=5205333}
\BIBentrySTDinterwordspacing

\bibitem{Zelditch1989}
\BIBentryALTinterwordspacing
S.~Zelditch, ``{The averaging method and Ergodic theory for pseudo-differential
  operators on compact hyperbolic surfaces},'' \emph{Journal of Functional
  Analysis}, vol.~82, no.~1, pp. 38--68, 1989. [Online]. Available:
  \url{http://linkinghub.elsevier.com/retrieve/pii/0022123689900918}
\BIBentrySTDinterwordspacing

\bibitem{kozek:thesis}
W.~Kozek, ``{Matched Weyl-Heisenberg expansions of nonstationary
  environments},'' \emph{PhD thesis, Vienna University of Technology}, 1996.

\bibitem{jung:wcnc08}
\BIBentryALTinterwordspacing
P.~Jung, ``{Pulse Shaping, Localization and the Approximate Eigenstructure of
  LTV Channels},'' \emph{The 2008 IEEE Wireless Communications and Networking
  Conference (WCNC), Las Vegas, USA, Invited Paper}, pp. 1114--1119, 2008.
  [Online]. Available: \url{http://arxiv.org/abs/0912.2828
  http://ieeexplore.ieee.org/search/srchabstract.jsp?tp={\&}arnumber=4489232}
\BIBentrySTDinterwordspacing

\bibitem{Durisi2010}
G.~Durisi, V.~I. Morgenshtern, H.~B{\"{o}}lcskei, U.~G. Schuster, and
  S.~Shamai, ``{Information Theory of Underspread WSSUS Channels},'' in
  \emph{Wireless Communications over Rapidly Time-Varying Channels},
  F.~Hlawatsch and G.~Matz, Eds.\hskip 1em plus 0.5em minus 0.4em\relax
  Academic Press, 2010, ch.~2, pp. 1--55.

\bibitem{jung:wssuspulseshaping}
\BIBentryALTinterwordspacing
P.~Jung and G.~Wunder, ``{The WSSUS Pulse Design Problem in Multicarrier
  Transmission},'' \emph{IEEE Trans. on. Communications}, vol.~55, no.~10, pp.
  1918--1928, 2007. [Online]. Available:
  \url{http://arxiv.org/abs/cs.IT/0509079
  http://ieeexplore.ieee.org/xpls/abs{\_}all.jsp?arnumber=4303344}
\BIBentrySTDinterwordspacing

\bibitem{kozek:identification:bandlimited}
W.~Kozek and G.~E. Pfander, ``{Identification of Operators with Bandlimited
  Symbols},'' \emph{SIAM Journal of Mathematical Analysis}, vol.~37, no.~3, pp.
  867--888, 2006.

\bibitem{wyner:intersymbol:gaussian}
A.~D. Wyner, ``{On the Intersymbol Interference Problem for the Gaussian
  Channel},'' \emph{Bell Sys. Tech. J.}, vol.~50, pp. 2355--2363, sep 1971.

\bibitem{Estrada1989}
\BIBentryALTinterwordspacing
R.~Estrada, J.~M. Gracia-Bondía, and J.~C. Várilly, ``{On asymptotic
  expansions of twisted products},'' \emph{Journal of Mathematical Physics},
  vol.~30, no.~12, p. 2789, 1989. [Online]. Available:
  \url{http://link.aip.org/link/JMAPAQ/v30/i12/p2789/s1{\&}Agg=doi}
\BIBentrySTDinterwordspacing

\bibitem{Gioev2001}
\BIBentryALTinterwordspacing
D.~Gioev, ``{Generalizations of Szeg{\"{o}} Limit Theorem: Higher Order Terms
  and Discontinuous Symbols},'' Ph.D. dissertation, 2001. [Online]. Available:
  \url{http://citeseerx.ist.psu.edu/viewdoc/download?doi=10.1.1.13.7421{\&}rep=rep1{\&}type=pdf}
\BIBentrySTDinterwordspacing

\bibitem{Laptev1996:Berezin}
\BIBentryALTinterwordspacing
A.~Laptev and Y.~Safarov, ``{A generalization of the Berezin-Lieb
  inequality},'' \emph{Translations of the American Mathematical {\ldots}},
  1996. [Online]. Available:
  \url{http://books.google.com/books?hl=en{\&}lr={\&}id=mMCJ1SLXDi4C{\&}oi=fnd{\&}pg=PA69{\&}dq=A+generalization+of+the+Berezin-Lieb+inequality{\&}ots=m8pZylnnR4{\&}sig=-XKmVdf01u3FVa9naJheA-CoCcc}
\BIBentrySTDinterwordspacing

\bibitem{Laptev1996}
\BIBentryALTinterwordspacing
------, ``{Szego type limit theorems},'' \emph{Journal of Functional Analysis},
  vol. 138, no.~2, pp. 544--559, jun 1996. [Online]. Available:
  \url{http://www.mth.kcl.ac.uk/{~}ysafarov/Publications/szegoinp.pdf}
\BIBentrySTDinterwordspacing

\bibitem{Brandolini1997}
\BIBentryALTinterwordspacing
L.~Brandolini, L.~Colzani, and G.~Travaglini, ``{Average decay of Fourier
  transforms and integer points in polyhedra},'' \emph{Arkiv f{\"{o}}r
  matematik}, vol.~35, no.~2, pp. 253--275, oct 1997. [Online]. Available:
  \url{http://www.springerlink.com/index/10.1007/BF02559969}
\BIBentrySTDinterwordspacing

\bibitem{folland:harmonics:phasespace}
\BIBentryALTinterwordspacing
G.~Folland, \emph{{Harmonic analysis in phase space}}, ser. Annals of
  Mathematics Studies.\hskip 1em plus 0.5em minus 0.4em\relax Princeton Univ
  Pr, 1989, no. 122. [Online]. Available:
  \url{http://books.google.com/books?hl=en{\&}lr={\&}id=5rcqrJT9gAgC{\&}oi=fnd{\&}pg=PR7{\&}dq=Harmonic+Analysis+in+Phase+Space{\&}ots=5-FmpUTL8H{\&}sig=Q-5wRH7XfBSa1anb9pER9tpLDqs}
\BIBentrySTDinterwordspacing

\bibitem{Gohberg1990}
I.~Gohberg, S.~Goldberg, and M.~Kaashoek, \emph{{Classes of linear operators -
  Vol.1}}.\hskip 1em plus 0.5em minus 0.4em\relax Birkh{\"{a}}user, 1990.

\bibitem{Cazenave1998}
\BIBentryALTinterwordspacing
T.~Cazenave and A.~Haraux, \emph{{An introduction to semilinear evolution
  equations}}.\hskip 1em plus 0.5em minus 0.4em\relax Oxford University Press,
  USA, 1998. [Online]. Available:
  \url{http://books.google.com/books?hl=en{\&}lr={\&}id=pcBo9WZBHyMC{\&}oi=fnd{\&}pg=PR13{\&}dq=An+Introduction+to+Semilinear+Evolution+Equations{\&}ots=zFO6WzMo1I{\&}sig=Tak4HixAHD009FO-KOF27KUS2{\_}I}
\BIBentrySTDinterwordspacing

\bibitem{grochenig:gaborbook}
K.~H. Groechenig, \emph{{Foundations of Time--Frequency Analysis}}.\hskip 1em
  plus 0.5em minus 0.4em\relax Birkh{\"{a}}user, 2001.

\bibitem{Trefethen1996}
\BIBentryALTinterwordspacing
L.~N. Trefethen, \emph{{Finite Difference and Spectral Methods for Ordinary and
  Partial Differential Equations}}, 1996. [Online]. Available:
  \url{http://www.comlab.ox.ac.uk/nick.trefethen/pdetext.html}
\BIBentrySTDinterwordspacing

\bibitem{lieb:analysis}
\BIBentryALTinterwordspacing
E.~H. Lieb and M.~Loss, \emph{{Analysis}}, ser. Graduate Studies in
  Mathematics.\hskip 1em plus 0.5em minus 0.4em\relax American Mathematical
  Society, 1997, vol.~14. [Online]. Available:
  \url{http://books.google.com/books?id=Eb{\_}7oRorXJgC{\&}pgis=1}
\BIBentrySTDinterwordspacing

\bibitem{Janson1988}
\BIBentryALTinterwordspacing
S.~Janson and J.~Peetre, ``{Paracommutators-Boundedness and Schatten-Von
  Neumann Properties},'' \emph{Transactions of the American Mathematical
  Society}, vol. 305, no.~2, p. 467, feb 1988. [Online]. Available:
  \url{http://www.jstor.org/stable/2000875?origin=crossref}
\BIBentrySTDinterwordspacing

\bibitem{Peller1985}
\BIBentryALTinterwordspacing
V.~Peller, ``{Hankel operators in the perturbation theory of unitary and
  self-adjoint operators},'' \emph{Functional Analysis and Its Applications},
  vol.~19, no.~2, 1985. [Online]. Available:
  \url{http://www.springerlink.com/index/V857159H3J2M3108.pdf}
\BIBentrySTDinterwordspacing

\bibitem{Folland2009}
\BIBentryALTinterwordspacing
G.~Folland, \emph{{Fourier analysis and its applications}}.\hskip 1em plus
  0.5em minus 0.4em\relax Amer Mathematical Society, 2009, vol.~4. [Online].
  Available:
  \url{http://books.google.com/books?hl=en{\&}lr={\&}id=ix2iCQ-o9x4C{\&}oi=fnd{\&}pg=PA1{\&}dq=Fourier+Analysis+and+Its+Applications{\&}ots=iVyUJxF{\_}Zx{\&}sig=C1lOEsEFhB{\_}aCVx8UsBxhkL0bb8}
\BIBentrySTDinterwordspacing

\bibitem{Peller1982}
\BIBentryALTinterwordspacing
V.~Peller and S.~Khrushchev, ``{Hankel operators, best approximations, and
  stationary Gaussian processes},'' \emph{Russian Mathematical Surveys},
  vol.~37, no.~1, pp. 61--144, 1982. [Online]. Available:
  \url{http://www.turpion.org/php/reference.phtml?journal{\_}id=rm{\&}paper{\_}id=3192{\&}volume=37{\&}issue=1{\&}type=xrf}
\BIBentrySTDinterwordspacing

\bibitem{Rochberg1982}
R.~Rochberg, ``{Trace Ideal Criteria for Hankel Operators and Commutators},''
  \emph{Indiana University Mathematics Journal}, vol.~31, no.~6, pp. 913--925,
  1982.

\bibitem{Peller79}
\BIBentryALTinterwordspacing
V.~Peller, ``{Nuclearity of Hankel operators},'' \emph{Preprint, Steklov Inst.
  Math., Leningrad}, 1979. [Online]. Available:
  \url{http://scholar.google.com/scholar?hl=en{\&}btnG=Search{\&}q=intitle:Nuclearity+of+Hankel+operators{\#}0}
\BIBentrySTDinterwordspacing

\bibitem{Herz1968}
\BIBentryALTinterwordspacing
C.~Herz, ``{Lipschitz spaces and Bernstein's theorem on absolutely convergent
  Fourier transforms},'' \emph{J. Math. Mech}, 1968. [Online]. Available:
  \url{http://scholar.google.com/scholar?hl=en{\&}btnG=Search{\&}q=intitle:Lipschitz+spaces+and+Bernstein's+theorem+on+absolutely+convergent+Fourier+transforms{\#}0}
\BIBentrySTDinterwordspacing

\bibitem{Moussai2012}
\BIBentryALTinterwordspacing
M.~Moussai, ``{Composition operators on Besov algebras},'' \emph{Revista
  Matem{\'{a}}tica Iberoamericana}, vol.~28, no.~1, pp. 239--272, 2012.
  [Online]. Available:
  \url{http://www.ems-ph.org/journals/show{\_}abstract.php?issn=0213-2230{\&}vol=28{\&}iss=1{\&}rank=13}
\BIBentrySTDinterwordspacing

\bibitem{Triebel:functionspaces:1}
\BIBentryALTinterwordspacing
H.~Triebel, \emph{{Theory of function spaces}}, vol. 5102. [Online]. Available:
  \url{http://books.google.com/books?hl=en{\&}lr={\&}id=q0GbXi6lNHEC{\&}oi=fnd{\&}pg=PR10{\&}dq=Theory+of+Function+Spaces{\&}ots=MmndI9XiA1{\&}sig=P7zF8-aMI1TtzWUA860yH{\_}{\_}t5-w}
\BIBentrySTDinterwordspacing

\bibitem{Runst1996}
\BIBentryALTinterwordspacing
T.~Runst and W.~Sickel, \emph{{Sobolev spaces of fractional order, Nemytskij
  operators, and nonlinear partial differential equations}}, 1996. [Online].
  Available:
  \url{http://books.google.com/books?hl=en{\&}lr={\&}id=x0-RHMg-8{\_}wC{\&}oi=fnd{\&}pg=PA1{\&}dq=Sobolev+spaces+of+fractional+order,+Nemytskij+operators+and+nonlinear+partial+differential+equations{\&}ots=0VSKnauANj{\&}sig=tI5TXq3CK8EJcVKlSiCTTEvhe5M}
\BIBentrySTDinterwordspacing

\bibitem{Peetre1976}
\BIBentryALTinterwordspacing
J.~Peetre, ``{New Thoughts on Besov Spaces},'' 1976. [Online]. Available:
  \url{http://www.getcited.org/pub/101801022}
\BIBentrySTDinterwordspacing

\bibitem{Sawada2003}
\BIBentryALTinterwordspacing
O.~Sawada, ``{On time-local solvability of the Navier-Stokes equations in Besov
  spaces},'' \emph{Advances in Differential Equations}, 2003. [Online].
  Available: \url{http://projecteuclid.org/euclid.ade/1355926847}
\BIBentrySTDinterwordspacing

\bibitem{Bourdaud2002}
\BIBentryALTinterwordspacing
G.~Bourdaud and M.~L.~D. Cristoforis, ``{Functional calculus in
  H{\"{o}}lder-Zygmund spaces},'' \emph{Transactions of the American
  Mathematical Society}, vol. 354, no.~10, pp. 4109--4129, 2002. [Online].
  Available:
  \url{http://www.ams.org/journals/tran/2002-354-10/S0002-9947-02-03000-3/}
\BIBentrySTDinterwordspacing

\bibitem{kadota:gaussian:reception}
T.~T. Kadota, ``{Optimum Reception of M-ary Gaussian Signals in Gaussian
  Noise},'' \emph{Bell Sys. Tech. J.}, vol.~44, pp. 2187--2197, 1965.

\bibitem{deabreu:qfunction09}
\BIBentryALTinterwordspacing
G.~T.~F. de~Abreu, ``{Jensen-Cotes Upper and Lower Bounds on the Gaussian
  Q-Function and Related Functions},'' \emph{IEEE transactions on
  communications}, vol.~57, no.~11, pp. 3328--3338, 2009. [Online]. Available:
  \url{http://cat.inist.fr/?aModele=afficheN{\&}cpsidt=22491957}
\BIBentrySTDinterwordspacing

\end{thebibliography}

\end{document}